\documentclass[a4paper, 12pt]{scrartcl}
\usepackage[english]{babel}
\usepackage[ansinew]{inputenc}
\usepackage{amsmath}
\usepackage[T1]{fontenc}
\usepackage{enumerate}
\usepackage[all]{xy}
\usepackage{amsmath,amssymb,amsthm,enumerate}
\usepackage{graphicx}
\usepackage{epsfig,psfrag}
\usepackage{bbm}

\newtheorem{lemma}{Lemma}
\newtheorem{remark}{Remark}
\newtheorem{theorem}{Theorem}
\pagestyle{myheadings}

\newcommand{\EE}{\mathbb{E}}
\newcommand{\NN}{\mathbb{N}}
\newcommand{\PP}{\mathbb{P}}


\newcommand{\bM}{\boldsymbol{M}}

\newcommand{\cA}{\mathcal{A}}


\title{The common ancestor process revisited}
\author{Sandra Kluth,
Thiemo Hustedt,
Ellen Baake \\
\small{Technische Fakult\"at,
Universit\"at Bielefeld,
Box 100131,
33501 Bielefeld, Germany } \\
\small{E-mail: \{skluth, thustedt, ebaake\}\emph{@}techfak.uni-bielefeld.de}
}
\date{}

\begin{document}
 
\maketitle

\begin{abstract}
\textbf{Abstract.} We consider the Moran model in continuous time with two types, mutation,
and selection. We concentrate on 
the ancestral line
and its stationary type distribution. Building on work by Fearnhead (J. Appl. Prob.
39 (2002), 38-54) and Taylor (Electron. J. Probab. 12 (2007), 808-847),
we characterise this distribution via the fixation probability of the offspring
of all individuals of favourable type (regardless of the offspring's types).
We concentrate on a finite population and stay with the resulting discrete
setting all the way through. 
This way, we extend previous results and gain new insight into the underlying particle picture. 

\textbf{2000 Mathematics Subject Classification:} Primary 92D15;    
         Secondary 60J28. 

\textbf{Key words:} Moran model, ancestral process with selection, ancestral line, common ancestor
process, fixation probabilities. 
\end{abstract}

\section{Introduction}
Understanding the interplay of random reproduction, mutation, and selection
is a
major topic of population genetics research. In line with the historical
perspective of
evolutionary research, modern approaches aim at tracing back the ancestry
of a sample
of individuals taken from a present population. Generically, in populations
that evolve 
at constant size over a long time span without recombination,
the ancestral lines
will eventually coalesce backwards in time into a single line of descent.
This ancestral line is of special
interest. In particular, its type composition
may differ substantially from the distribution at present time.
This mirrors the fact that the ancestral
line consists of those
individuals that are successful in the long run; thus, its type distribution
is
expected to be biased towards the favourable types.

This article is devoted to the ancestral line
in a classical model of population
genetics, namely, the Moran model in continuous time with two types,
mutation, and selection (i.e., one type is `fitter' than the other).
We are particularly interested in the
stationary distribution of the types along the ancestral line, 
to be called the \emph{ancestral type distribution}.
We build on previous work by Fearnhead
\cite{Fearnhead} and Taylor \cite{Taylor}. Fearnhead's approach is
based on the \emph{ancestral selection graph}, or ASG for short
\cite{Krone,Neuhauser}. The ASG is an
extension of Kingman's coalescent \cite{Kingman1,Kingman2},
which is the central tool
to describe the genealogy of a finite sample in the \emph{absence}
of selection. The ASG copes with selection by including so-called
\emph{virtual branches} in addition to the \emph{real branches}
that define the true genealogy.
Fearnhead calculates the ancestral type distribution in terms of
the coefficients of a series expansion that is related to the number
of (`unfit') virtual branches.
 
Taylor uses diffusion theory and a backward-forward construction that
relies on a description of the \emph{full} population. He
characterises the ancestral type
distribution  in terms of the fixation probability of
the offspring of all `fit' individuals (regardless of the offspring's types).
This fixation probability is calculated via a boundary value problem.

Both approaches rely strongly on analytical tools; in particular,
they employ the diffusion limit (which assumes infinite population
size, weak selection and mutation)
from the very beginning. The results only have partial
interpretations in terms of the graphical representation of the model
(i.e., the representation that makes individual lineages and their
interactions explicit). The aim of this article is to complement these
approaches by starting from the graphical representation for a population
of finite size and staying with the resulting discrete setting all the
way through, performing the diffusion limit only at the very end. This will
result in an extension of the results to arbitrary selection strength, as 
well as a number of new insights, such as an intuitive explanation of
Taylor's boundary value problem in terms of the particle picture, and
an alternative derivation of the ancestral type distribution.

The paper is organised as follows. We start with a short outline of the
Moran model (Section 2). In Section 3, we introduce the common ancestor type
process and briefly recapitulate Taylor's and Fearnhead's approaches.
We concentrate on a Moran model of finite size and trace the descendants
of the initially `fit' individuals forward in time. Decomposition
according to what can happen after the first step gives a difference
equation, which turns into Taylor's diffusion equation in the limit.
We solve this difference equation and obtain the fixation probability
in the finite-size model in closed form. In Section 5, we derive the
coefficients of the ancestral type distribution within the discrete
setting. Section 6 summarises and discusses the results.

\section{The Moran model with mutation and selection}
\label{The Moran model with mutation and selection}
We consider a haploid population of fixed size $N \in \mathbb{N}$ in 
which each individual is characterised by a type $i \in S = \{0,1\}$. 
If an individual reproduces, its single offspring inherits the parent's 
type and replaces a randomly chosen individual, maybe its own parent. This
way the replaced individual dies and the population size remains constant. 

Individuals of type $1$ reproduce at rate $1$, whereas individuals of type 
$0$ reproduce at rate $1+s^{}_N$, $s^{}_N \geqslant 0$. 
Accordingly, type-$0$ individuals are termed  `fit', type-$1$ individuals are
`unfit'. 
In line with a central idea of the ASG, we will
decompose reproduction events into neutral and selective ones. Neutral ones
occur at rate $1$ and happen to all individuals, whereas  selective events 
occur at rate $s_N^{}$ and are
reserved for type-$0$ individuals. 

Mutation occurs independently of reproduction. An individual of type
$i$ mutates to type $j$ at rate $u^{}_N \nu^{}_j$, $u^{}_N \geqslant 0$, $0
\leqslant \nu^{}_j \leqslant 1$, $\nu^{}_0 + \nu^{}_1 = 1$. This is to be understood
in the sense that every individual, regardless of its type, mutates at
rate $u^{}_N$ and the new type is $j$ with probability
$\nu^{}_j$. Note that this includes the possibility of `silent'
mutations, i.e., mutations from
type $i$ to type $i$. 

The Moran model has a well-known graphical representation as an
interacting particle system (cf. Fig. \ref{moran model}).
The $N$ vertical lines represent the $N$ individuals, with time running from 
top to bottom in the figure. Reproduction events are represented by arrows with the 
reproducing individual at the base and the offspring at the tip. 
Mutation events are marked by bullets.

\begin{figure}[ht]
\begin{center}
\psfrag{t}{\Large$t$}
\psfrag{0}{\Large$0$}
\psfrag{1}{\Large$1$}
\includegraphics[angle=0, width=7cm, height=5cm]{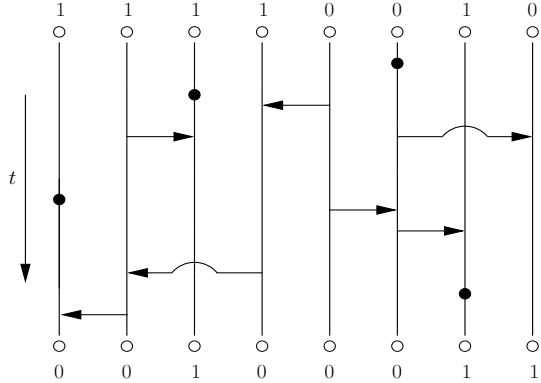}
\caption{The Moran model. The types ($0=$ fit, $1=$ unfit) are
indicated for the initial population (top) and the final one (bottom).}
\label{moran model}
\end{center}
\end{figure} 

We are now interested in the process $\left( Z_{t}^N \right)_{t \geqslant 0}$, where 
$Z_{t}^N$ is the number of individuals of type $0$ at time $t$. When the number
of type-$0$ individuals is $k$, it increases by one at rate 
$\lambda_k^{N}$ and decreases by one at rate $\mu_k^{N}$, where
\begin{equation}\label{eq:lambda_mu}
\lambda_k^{N}=\frac{k(N-k)}{N}(1+s_N^{})+(N-k)u_N^{}\nu_0^{}\quad \text{and}
\quad \mu_k^{N}=\frac{k(N-k)}{N}+ku_N^{}\nu_1^{}. 
\end{equation}
Thus, $\left( Z_{t}^N \right)_{t \geqslant 0}$ is a birth-death process with birth
rates $\lambda_k^{N}$ and death rates $\mu_k^{N}$. For $u_N^{} >0$ and $0 < \nu_0^{}, \nu_1^{} <1$ its stationary distribution
is $\left( \pi^N_Z \left(k\right) \right)_{0 \leqslant k \leqslant N}$ with
\begin{equation}
 \pi^N_Z \left(k\right) = C^{}_N  \prod_{i=1}^k \frac{\lambda_{i-1}^{N}}{\mu_i^{N}}  , \quad 0 \leqslant k \leqslant N,
 \label{stationaere Verteilung Moran}
\end{equation}
where $C^{}_N$ is a normalising constant (cf. \cite[p. 19]{Durrett1}). (As usual, an empty 
product is understood as $1$.)

To arrive at a diffusion, we consider the usual rescaling 
\begin{equation*}
\left( X_{t}^N  \right)_{t \geqslant 0} := \frac{1}{N} \left( Z_{Nt}^N  \right)_{t \geqslant 0},
\end{equation*}
and assume that $\lim_{N \to \infty}Nu_N^{} = \theta$, $0 < \theta < \infty$,
and $\lim_{N \to \infty}Ns_N^{} = \sigma$, $0 \leqslant \sigma < \infty$. As $N \to \infty$,
we obtain the well-known diffusion limit 
\begin{equation*}
\left(X_t^{} \right)_{t \geqslant 0} := \lim_{N \to \infty} \left( X_{t}^N  \right)_{t \geqslant 0}.
\end{equation*}
Given $x \in [0,1]$, a sequence $\left( k_N^{} \right)_{N \in \mathbb{N}}$ with
$k_N^{} \in \lbrace 0, \dots ,N \rbrace $ and $\lim_{N \to \infty} \frac{k_N^{}}{N} = x$,
$\left( X_t^{} \right)_{t \geqslant 0}$ is characterised by the drift coefficient
\begin{equation}
 a(x) = \lim_{N \to \infty} (\lambda_{k_N^{}}^{N} - \mu_{k_N^{}}^{N}) 
 = (1-x)\theta \nu^{}_0 - x \theta \nu_1^{} + (1-x)x \sigma
\label{drift coefficient}
\end{equation}
and the diffusion coefficient
\begin{equation}
 b(x)= \lim_{N \to \infty} \frac{1}{N} \left(\lambda_{k_N^{}}^{N} + \mu_{k_N^{}}^{N}  \right)  
 = 2 x (1-x).
\label{diffusion coefficient}
\end{equation}
Hence, the infinitesimal generator $A$ of the diffusion is defined by
\begin{equation*}
Af(x)= (1-x) x \frac{\partial^2{}}{\partial{x^2}} f(x)  + \left[ (1-x)\theta \nu^{}_0 - x \theta \nu_1^{} + (1-x)x \sigma \right] \frac{\partial{}}{\partial{x}} f(x) , \ f \in \mathcal{C}^2_{}([0,1]).
\end{equation*}
The stationary density $\pi^{}_X$ -- known as Wright's formula -- is given by
\begin{equation}\label{wright}
 \pi^{}_X(x)= C (1-x)^{\theta \nu^{}_1 - 1} x^{\theta \nu^{}_0 - 1} \exp (\sigma x),
\end{equation}
where $C$ is a normalising constant. 
See \cite[Ch. 7, 8]{Durrett} or \cite[Ch. 4, 5]{Ewens} for reviews of diffusion processes
in population genetics and \cite[Ch. 15]{Karlin} for a general survey of diffusion theory. 

In contrast to our approach starting from the Moran model, 
\cite{Fearnhead} and \cite{Taylor} choose the
diffusion limit of the \emph{Wright-Fisher model} as the basis for the common 
ancestor process. This is, however, of minor importance, 
since both diffusion limits differ 
only by a rescaling of time by a  factor of $2$ (cf. \cite[Ch. 7]{Durrett}, \cite[Ch. 5]{Ewens} or 
\cite[Ch. 15]{Karlin}).

\section{The common ancestor type process}
\label{The common ancestor type process}
Assume that the population is stationary and evolves according to the
diffusion process $\left(X_t^{} \right)_{t \geqslant 0} $. Then, at any time
$t$, there almost surely exists a unique individual that is, at some time $s > t$,
ancestral to the whole population; cf. Fig.~\ref{aline}. (One way to see
this is via \cite[Thm. 3.2, Corollary
3.4]{Krone}, which shows that the expected time to the ultimate ancestor
in the ASG remains bounded if the sample size tends to infinity.) We say that the descendants of this individual become
fixed and call it the \textit{common ancestor at time $t$}. The lineage of these
distinguished individuals over time defines the so-called \textit{ancestral
  line}. Denoting the type of the common ancestor at time $t$ by $ I_t
$, $I_t \in S$, we term $\left( I_t \right)_{t \geqslant 0}$ the
\textit{common ancestor type process} or \textit{CAT process} for short. Of
particular importance is its stationary type distribution
$\alpha=\left( \alpha^{}_i \right)_{i \in S}$, 
to which we will
refer as the \emph{ancestral type distribution}. Unfortunately,
the CAT process is not Markovian.  But two approaches
are available that augment $\left( I_t \right)_{t \geqslant 0}$ by a
second component to obtain a Markov process. They go back to Fearnhead
\cite{Fearnhead} and Taylor \cite{Taylor}; we will recapitulate them
below.
\begin{figure}[ht]
\begin{center}
\psfrag{t}{\Large$t$}
\psfrag{s}{\Large$s$}
\psfrag{t-h}{\Large$t - \tau_0^{}$}
\psfrag{CA}{\Large CA}
\includegraphics[angle=0, width=7cm, height=5cm]{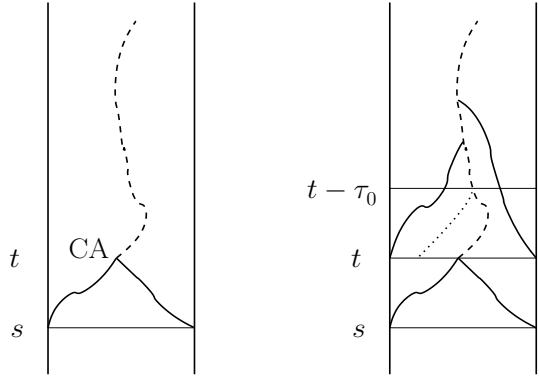}
\caption{Left: The common ancestor at time $t$ (CA) is the individual whose progeny 
will eventually fix in the population (at time $s > t$). Right: If we pick an arbitrary 
individual at time $t$, there exists a
minimal time $\tau_0^{}$ so that the individual's line of ancestors (dotted)
corresponds to the ancestral line (dashed) up to time $t - \tau_0^{}$.}
\label{aline}
\end{center}
\end{figure}

\subsection{Taylor's approach}
For ease of exposition, we start with Taylor's approach \cite{Taylor}.
It relies on a description of the \emph{full} population forward in time (in the  diffusion limit
of the
Moran model as $N \to \infty$) and builds on the so-called \textit{structured coalescent} \cite{Barton}. 
The process is  $\left( I_t, X_t \right)_{t \geqslant 0}$, 
with states $(i,x)$, $i \in S$ and $x \in [0,1]$. In \cite{Taylor} this
process is termed \textit{common ancestor process (CAP)}. 

Define $h(x)$ as the probability that the common ancestor at a
given time is of type $0$, provided that the frequency of type-$0$ individuals
at this time is $x$. Obviously, $h(0)=0$, $h( 1) = 1$. Since the 
process is time-homogeneous, $h$ is independent of time.
Denote the (stationary) distribution of $\left( I_t, X_t \right)_{t \geqslant 0}$
by $\pi^{}_T$. Its marginal distributions are $\alpha$
(with respect to the first variable) and
$\pi^{}_X$ (with respect to the second variable).
 $\pi^{}_T$ may then be written as
the product of the marginal density $\pi^{}_X(x)$ and the 
conditional probability $h(x)$ (cf. \cite{Taylor}):
\begin{align*}
\pi^{}_T\left(0, x \right) dx &= h( x ) \pi^{}_X( x ) dx, \displaybreak[0] \\
\pi^{}_T\left(1, x \right) dx &= \left(1-h( x )\right) \pi^{}_X( x ) dx.
\end{align*}
Since $\pi^{}_X$ is well known \eqref{wright}, it remains to specify $h$. 
Taylor uses a backward-forward construction within diffusion theory to derive a 
boundary value problem for $h$,
namely:
\begin{equation}
\begin{split}
 &\frac{1}{2} b(x)  h''  (x) + a(x) h'(x)  
 - \Big ( \theta \nu_1^{} \frac{x}{1 - x} + \theta \nu_0^{} \frac{1-x}{x} 
\Big ) h ( x ) + \theta \nu_1^{} \frac{x}{1-x} = 0, \\
&h(0)=0, h(1)=1.
\label{h DGL}
\end{split}
\end{equation}
Taylor  shows that \eqref{h DGL} has a unique solution. The stationary 
distribution of $\left( I_t, X_t \right)_{t \geqslant 0}$ is thus determined in a 
unique way as well. The function $h$ is smooth in $(0,1)$ and its derivative $h'$ 
can be continuously extended to $[0,1]$ (cf. \cite[Lemma 2.3, Prop. 2.4]{Taylor}). 

In the neutral case (i.e.,  without selection, $\sigma = 0$), all individuals 
reproduce at the same rate, independently of their types.
 For reasons of symmetry, the  common ancestor thus is a uniform random draw
from the population; consequently, $h ( x ) = x$. 
In the presence of selection, Taylor determines the solution of the 
boundary value problem via a series expansion in $\sigma$
(cf. \cite[Sec. 4]{Taylor} and see below), which yields
\begin{align}
 &h(x)= x + \sigma x^{-\theta \nu_0^{}}_{} \left( 1-x \right)^{- \theta \nu_1^{}}_{} \exp ( - \sigma x ) \int_0^x \left( \tilde{x} - p \right) p^{\theta \nu_0^{} }_{} \left( 1-p \right)^{\theta \nu_1^{}}_{} \exp( \sigma p ) dp \label{solution h}\\
&\text{with  }\tilde{x}= \frac{\int_0^1 p_{}^{\theta \nu_0^{} + 1} \left( 1-p \right)^{\theta \nu_1^{}}_{} \exp ( \sigma p ) dp}{\int_0^1 p_{}^{\theta \nu_0^{}} \left( 1-p \right)^{\theta \nu_1^{}}_{} \exp ( \sigma p ) dp} = 
\frac{\EE_{\pi_X^{}}(X^2 (1-X))}{\EE_{\pi_X^{}}(X (1-X))}. \label{x schlange}
\end{align}
The stationary type distribution of the ancestral line now
follows  via marginalisation:
\begin{equation}
 \alpha_0^{} = \int_0^1 h ( x ) \pi_X^{}( x ) dx \text{ \ and \ } \alpha_1^{} = \int_0^1 \left( 1-h (x ) \right) \pi_X^{} ( x ) dx .
\label{alpha}
\end{equation} 
Following \cite{Taylor}, we define $\psi (x ):= h(x) - x$ and write
\begin{equation}
 h (x ) = x + \psi ( x ).
\label{def psi}
\end{equation}
Since $h(x )$ is the conditional probability that the common ancestor is fit, $\psi (x)$ 
is the part of this probability that is due to selective reproduction.  \\
Substituting (\ref{def psi}) into (\ref{h DGL}) leads to a boundary value problem
for $\psi$:
\begin{equation}
\begin{split}
& \frac{1}{2} b(x) \psi'' (x) + a(x) \psi' \left( x \right) 
 - \left( \theta \nu^{}_1 \frac{x}{1 - x} + \theta \nu^{}_0 \frac{1-x}{x} \right) \psi(x)+\sigma x \left( 1 - x \right)=0, \\
 &\psi(0)= \psi(1)=0. 
\label{psi DGL}
\end{split}
\end{equation}
Here, the smooth inhomogeneous term is more favourable as compared to the divergent 
inhomogeneous term in (\ref{h DGL}). Note that Taylor actually derives the boundary value problems \eqref{h DGL}
and \eqref{psi DGL} for the more general case of frequency-dependent selection,
but restricts himself to frequency-independence to derive solution \eqref{solution h}.

\subsection{Fearnhead's approach}
\label{Fearnhead's approach}
We can only give a brief introduction to Fearnhead's approach
\cite{Fearnhead} here. On the basis of the ASG, the
ancestry of a randomly chosen individual from the present (stationary)
population
is traced backwards in time. More precisely, one considers
the process $(J_{\tau})_{\tau \geqslant 0}$ with values in $S$, where
$J_{\tau}$ is the type of the individual's ancestor
at time $\tau$ before the present (that is, at forward time $t - \tau$).
Obviously, there is a minimal time $\tau_0^{}$ so that, for all
$\tau \geqslant \tau_0^{}$, $J_{\tau}=I_{t-\tau}$ (see also Fig.~\ref{aline}),
provided the underlying  process $(X_t)_{t \geqslant 0}$ is extended to $(-\infty, \infty)$.

To make the process Markov, the true ancestor (known as the
\textit{real} branch) is augmented by a collection of \textit{virtual}
branches (see \cite{Baake,Krone,Neuhauser,Stephens} for the
details). Following \cite[Thm. 1]{Fearnhead}, certain virtual branches
may be removed (without compromising the Markov property) and the
remaining set of virtual branches contains only unfit ones. We will
refer to the resulting construction as the \textit{pruned ASG}. It is
described by the process $(J_{\tau}, V_{\tau})_{ \tau \geqslant 0}$, where $V_{\tau}$ (with values in $\NN_0$) is the
number of virtual branches (of type $1$). $(J_{\tau}, V_{\tau})_{ \tau \geqslant 0}$ is termed \textit{common ancestor process} in \cite{Fearnhead}
(but keep in mind that it is $(I_t, X_t)$ that is called CAP in \cite{Taylor}).
Reversing the direction of
time in the pruned ASG yields an alternative augmentation of the CAT
process (for $\tau  \geqslant \tau_0^{}$). 

Fearnhead provides a representation of the 
stationary distribution of the pruned process, which we will  denote by
$\pi_F^{}$. This stationary distribution is expressed in terms of
constants $\rho_1^{(k)}, \dots , \rho_{k+1}^{(k)}$ defined
by the following \textit{backward} recursion: 
\begin{equation}\label{lambdaback}
 \rho_{k+1}^{(k)}=0 \text{ \ and \ } \rho_{j-1}^{(k)}
 = \frac{\sigma}{j + \sigma + \theta - (j + \theta\nu_1) \rho_{j}^{(k)}}, \  k \in \mathbb{N}, 
2 \leqslant j \leqslant k+1.
\end{equation}
The limit $\rho^{}_j := \lim_{k \rightarrow \infty} \rho_j^{(k)}$ exists 
(cf. \cite[Lemma 1]{Fearnhead}) and  the stationary distribution of 
the pruned ASG is given by (cf. \cite[Thm. 3]{Fearnhead})
\begin{align*}
&\pi^{}_F(i, n)=
\begin{cases}
a_n \EE_{\pi^{}_X} (X(1-X)^n),
   & \text{if} \ \ i = 0 ,\\
(a_n-a_{n+1})\EE_{\pi^{}_X} ((1-X)^{n+1}) , 
& \text{if} \ \ i = 1,
\end{cases} \\
& \text{with \  } a_n:=  \prod_{j=1}^n \rho_j
\end{align*}
for all $n \in \NN_0$.
Fearnhead proves this result by straightforward verification
of the stationarity condition; the calculation is somewhat cumbersome and
does not yield insight into the connection with the graphical
representation of the pruned ASG. Marginalising over the number of virtual 
branches results in the stationary type distribution of the ancestral line, namely,
\begin{equation}
 \alpha^{}_i= \sum_{n \geqslant 0} \pi^{}_F (i, n).
\label{alpha Fearnhead}
\end{equation}
Furthermore, this reasoning points to an alternative representation of $h$ respectively $\psi$ (cf. \cite{Taylor}):
\begin{equation}
 h(x)= x + x \sum_{n \geqslant 1} a_n (1-x)^n \text{ \ respectively \ } \psi(x)= x \sum_{n \geqslant 1} a_n (1-x)^n .  
\label{h solution summe} \\
\end{equation}
The $a_n$, to which we will refer as \emph{Fearnhead's coefficients},
can be shown \cite{Taylor}
to follow the second-order \emph{forward} recursion
\begin{equation}
\begin{split}
 &  \left( 2 + \theta \nu^{}_1 \right) a_2 - \left( 2 + \sigma + \theta  \right)a_1 + \sigma = 0  ,\\ 
& \left( n + \theta \nu^{}_1  \right)a_n - \left( n + \sigma + \theta \right)a_{n-1} + \sigma a_{n-2}=0, \quad n \geqslant 3 .
\label{a_n}
\end{split}
\end{equation}
Indeed, (\ref{h solution summe}) solves the boundary problem
\eqref{h DGL} and,
therefore, equals \eqref{solution h} (cf. \cite[Lemma 4.1]{Taylor}).

The forward recursion \eqref{a_n} is greatly
preferable to the backward recursion \eqref{lambdaback}, which can
only be solved approximately with initial value $\rho_n^{} \approx 0$ 
for some large $n$. What is still missing is the initial value, $a_1$.
To calculate it, Taylor defines (cf. \cite[Sec. 4.1]{Taylor})
\begin{equation}
 v(x):= \frac{h(x)-x}{x} = \frac{\psi(x)}{x}= \sum_{n \geqslant 1}  a_n (1-x)^n 
 \label{v}
\end{equation}
and uses\footnote{Note the missing factor of $1/n$ in his equation (28).}
\begin{equation}
a_n = \frac{(-1)^n}{n!} v^{(n)}_{}(1).
\label{a_n taylor}
\end{equation}
This way a straightforward (but lengthy)
calculation (that includes a differentiation of expression \eqref{solution h})
yields
\begin{equation}
 a_1 = -v'(1)=  - \psi'(1)= \frac{\sigma}{1 + \theta \nu_1^{}} (1-\tilde{x}).
\label{a_1}
\end{equation}

\section{Discrete approach}
\label{Alternative derivation}
Our focus is on the stationary type distribution
$\left(\alpha_i^{}\right)_{i \in S}$ of the CAT process.
We have seen so far that it corresponds to the marginal distribution
of both $\pi_T^{}$ and $\pi_F^{}$, with respect to the first variable.
Our aim now is to establish a closer connection between
the properties of the ancestral type distribution and the graphical
representation of the Moran model. 
In a first step we re-derive the differential equations for $h$ and $\psi$
in a direct way, on the basis of the particle picture for a finite population.
This derivation will be elementary and, at the same time, it will 
provide a direct interpretation of the resulting differential equations.

\subsection{Difference and differential equations for $h$ and $\psi$}
\label{Derivation of differential equations}
\textbf{Equations for $h$.}  Since it is essential to make
the connection with the graphical representation explicit, we start
from a population of finite size $N$, rather than  from the
diffusion limit. Namely, we look at a new Markov process
$(\bM_t^{},Z^N_t)_{t \geqslant 0}$ with the natural filtration $(\mathcal{F}_t^N)_{t \geqslant 0}$, where 
$\mathcal{F}_t^N := \sigma ((\bM_s^{},Z^N_s) \mid s \leqslant t)$.
$Z^N_t$ is the number of fit
individuals as before and $\bM_t^{}=(M_0^{},M_1^{})_t$ 
holds the number of descendants of types $0$ and $1$ at time $t$ of 
an unordered sample with composition $\bM_0^{}=(M_0,M_1)_0$ collected at 
time $0$. More precisely, we
start with a $\mathcal{F}_0^N$-measurable state $(\bM_0,Z^N_0)= (\boldsymbol{m},k)$ (this  means that
$\bM_0$ must  be independent of the future evolution; but note that it
need not be a random sample) and observe the
population evolve in forward time. At time $t$, count the
type-$0$ descendants and the type-$1$ descendants of our initial sample
$\boldsymbol{M}_0$ and summarise the results in the unordered
sample $\boldsymbol{M}^{}_t$. Together with $Z^N_t$, this gives the
current state $(\bM_t,Z^N_t)$
(cf. Fig. \ref{forward}).
\begin{figure}[ht]
\begin{center}
\psfrag{t}{\large $t$}
\psfrag{0}{\large $0$}
\psfrag{1}{\large $1$}
\psfrag{((2,2), 3)}{\Large $\left( (2,2), 3 \right)$}
\psfrag{((1,3), 2)}{\Large $\left( (1,3), 2 \right)$}
 \includegraphics[angle=0, width=8.5cm, height=5.95cm]{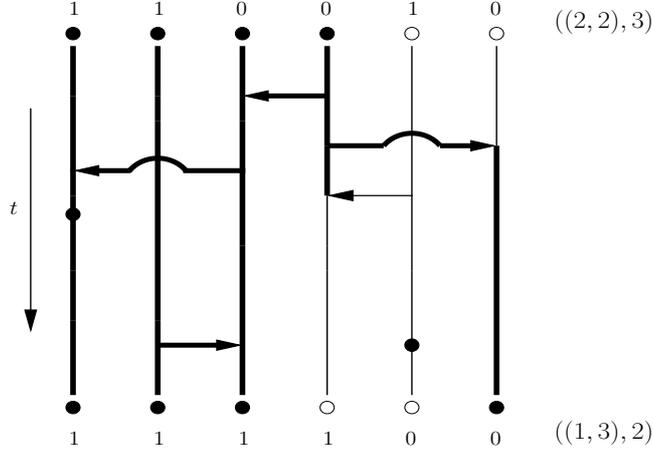}
 \caption{The process $ \left(\boldsymbol{M}^{}_t, Z^N_t \right)_{t
     \geqslant 0}$. The initial sample $\boldsymbol{M}^{}_0=(2,2)$ in a
   population of size $N=6$ (whose number of type-$0$ individuals
   is $Z^N_0=3$) is marked black at the top. Fat lines represent
   their descendants. At the later time (bottom), the descendants
   consist of one type-$0$ individual and three type-$1$ individuals,
   the entire population has two individuals of type $0$. The initial
   and final states of the process are noted at the right.}
 \label{forward}
\end{center}
\end{figure}

As soon as the initial sample is ancestral to all $N$ individuals, it
clearly will be ancestral to all $N$ individuals at all later
times. Therefore,
\begin{equation*}
 \mathcal{A}^{}_N:=\left\{(\boldsymbol{m}, k) : 
 k \in \{0, \dots, N\}, m_0 \leqslant k, \lvert \boldsymbol{m} \rvert= N  \right\},
\end{equation*}
where $\lvert \boldsymbol{m} \rvert=m_0 + m_1$ for a sample $\boldsymbol{m}=(m_0,m_1)$,
is a closed (or invariant) set of the Markov chain. (Given a Markov
chain $( Y ( t) )_{t \geqslant 0}$ in continuous time on a discrete state
space $E$, a non-empty subset $\cA\subseteq E$ is called closed (or
invariant) provided that $\mathbb P (Y(s)=j \mid Y(t)=i)=0$ $\forall s
 > t$, $i \in \cA$, $j \notin \cA$ (cf. \cite[Ch. 3.2]{Norris}).) 

 From now on we restrict ourselves to the initial value
 $(\bM_0,Z^N_0)=\left( (k, 0), k \right)$, i.e. the
 population consists of $k$ fit individuals and the initial sample
 contains them all.  
Our aim is to calculate the probability of absorption in
 $\mathcal{A}^{}_N$, which will also give us the fixation probability of the descendants
 of the type-$0$ individuals. 
 In other words, we are interested in the
 probability that the common ancestor at time 0 belongs to our fit sample
 $\boldsymbol{M}_0^{}$. Let us define $h^N_{}$ as the equivalent of $h$ in
 the case of finite population size $N$, that is, $h^N_{k} $ is the
 probability that one of the $k$ fit individuals is the common
 ancestor given $Z^N_0=k$. Equivalently, $h_k^N$ is the
 absorption probability of $(\bM_t, Z^N_t)$ in $\mathcal{A}^{}_N$,
 conditional on $(\bM_0, Z^N_0)=\left( (k, 0), k
 \right)$. Obviously, $h^N_0 = 0$, $h^N_N=1$. It is important to note
 that, given absorption in $\cA^N_{}$, 
the common ancestor is a random draw from the initial sample. Therefore,
\begin{equation}\label{fitfix}
\PP\big(\text{a specific type-0 individual will fix} \mid Z^N_0=k\big) =
\frac{h_k^N}{k}.
\end{equation}
Likewise,
\begin{equation}\label{unfitfix}
\PP\big(\text{a specific type-1 individual will fix} \mid Z^N_0=k \big) =
\frac{1-h_k^N}{N-k}.
\end{equation}

 We will now calculate the absorption probabilities with the
 help of `first-step analysis' (cf. \cite[Thm. 3.3.1]{Norris}, see also \cite[Thm. 7.5]{Durrett}).
Let us recall the
method for convenience.
\begin{lemma}[`first-step analysis'] Assume that $\left( Y \left( t
\right) \right)_{t \geqslant 0}$ is a Markov chain in continuous time on a
discrete state space $E$, $\cA\subseteq E$ is a closed set and $T_x$,
$x \in E$, is the waiting time to leave the state
$x$. Then for all $y \in E$,
\begin{align*} 
\mathbb{P}\left(  Y \ \textnormal{absorbs
in} \ \cA \mid  Y( 0 )=y \right) &=\sum_{z \in E: z \neq y}
\mathbb{P}\left( Y ( T_y ) = z \mid Y ( 0 \right) = y ) \\
&\hphantom{=\sum_{z \in E: z \neq x} } \times \mathbb{P} \left( Y
\ \textnormal{absorbs in} \ \cA \mid Y ( 0 ) =z
\right).  
\end{align*} \label{first step} 
\end{lemma}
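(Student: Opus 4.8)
The plan is to exploit the structure of a continuous-time Markov chain on a discrete state space: starting from $y$, the chain remains at $y$ for the random holding time $T_y$ and then performs its first jump to some state $z \neq y$, after which---by the Markov property---it evolves as a fresh copy of the chain started at $z$. Writing $h(y) := \PP(Y \text{ absorbs in } \cA \mid Y(0) = y)$, the goal is to decompose $h(y)$ according to the destination of this first jump.

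First I would condition on the first jump. Assuming $y$ is non-absorbing, the chain almost surely leaves $y$ at the finite time $T_y$, and the events $\{Y(T_y) = z\}$, $z \neq y$, partition the sample space up to a null set. The law of total probability then gives
\[ h(y) = \sum_{z \neq y} \PP\big(Y(T_y) = z \mid Y(0) = y\big)\, \PP\big(Y \text{ absorbs in } \cA \mid Y(0) = y,\, Y(T_y) = z\big), \]
so it remains to identify the conditional absorption probability on the right with $h(z)$.

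Next I would invoke the strong Markov property at the stopping time $T_y$. The key observation---and the one place where closedness of $\cA$ enters---is that the absorption event is determined by the post-jump trajectory: since the chain sits at the fixed state $y$ throughout $[0, T_y)$, we have $\{Y(t) \in \cA \text{ for some } t \geqslant 0\} = \{y \in \cA\} \cup \{Y(t) \in \cA \text{ for some } t \geqslant T_y\}$, and because $\cA$ is closed, ``visiting $\cA$'' and ``being absorbed in $\cA$'' coincide. Hence, on $\{Y(T_y) = z\}$, the absorption event for $Y$ agrees almost surely with the absorption event for the shifted process $(Y(T_y + t))_{t \geqslant 0}$. By the strong Markov property this shifted process, conditional on $Y(T_y) = z$, is a copy of the chain started at $z$ and is conditionally independent of $\mathcal{F}_{T_y}$; therefore the conditional probability equals $h(z)$, and substituting into the display above yields the claimed identity.

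The step I expect to be the main obstacle is the careful bookkeeping in this last argument: one must check that the absorption event is measurable with respect to the post-$T_y$ history (so that the strong Markov property applies cleanly) and that the sojourn at $y$ before the first jump contributes nothing extra to the decomposition---which is precisely where the assumption that $\cA$ is closed is needed. It is also worth recording the trivial boundary cases (for $y \in \cA$ both sides equal $1$, and an absorbing $y$ is handled by the usual conventions), though in the intended application, where $y \notin \cA$ is non-absorbing, these do not arise.
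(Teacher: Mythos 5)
The paper gives no proof of this lemma at all: it is recalled as a standard textbook fact, with citations to Norris (Thm.~3.3.1) and Durrett (Thm.~7.5). Your argument --- first-jump decomposition plus the strong Markov property at the stopping time $T_y$, with closedness of $\cA$ used both to identify ``visiting $\cA$'' with ``absorbing in $\cA$'' and to confine the absorption event to the post-jump trajectory --- is correct and is precisely the standard argument underlying the cited results, so it fills in exactly what the paper leaves to the references.
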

So let us decompose the event `absorption in $\mathcal{A}^{}_N$'
according to the first step away from
the initial state. Below we analyse all possible transitions (which are
illustrated in Fig. \ref{transitions}), state the transition rates and 
calculate absorption probabilities, based upon the new state. We assume 
throughout that $0<k<N$.
\begin{figure}[ht]
\begin{center}
\psfrag{(a)}{(a)}
\psfrag{(b)}{(b)}
\psfrag{(c)}{(c)}
\psfrag{(d)}{(d)}
\psfrag{t}{\small $t$}
 \includegraphics[angle=0, width=7.2cm, height=6.8cm]{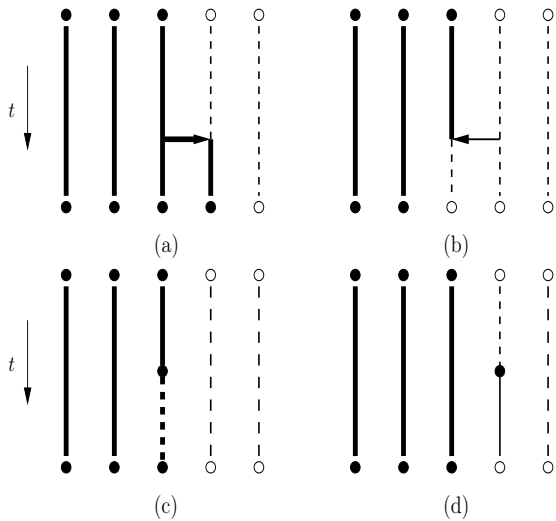}
 \caption{Transitions out of $\left((k,0), k\right)$. Solid
   lines represent type-$0$ individuals, dashed ones type-$1$
   individuals. Descendants of type-$0$ individuals (marked black at
   the top) are represented by bold lines.}
 \label{transitions}
\end{center}
\end{figure}
\begin{description}

  \item[(a)] \textit{$\left((k,0),k\right) \rightarrow
    \left((k + 1, 0), k+1\right)$}:

  One of the $k$ sample individuals of type $0$ reproduces and
  replaces a type-$1$ individual. We distinguish according to the kind of
  the reproduction event.
  \begin{description}
 \item [(a1)] Neutral reproduction rate: $\frac{k ( N- k )}{N}$.
\item[(a2)] Selective reproduction rate: $\frac{k ( N- k )}{N}s_N^{}$.
\end{description}
In both cases, the result is a sample containing all $k + 1$ fit
individuals. Now $(\bM_t,Z^N_t)$ starts afresh in the new state $\left((k
+ 1, 0), k+1\right)$, with absorption probability $h^{N}_{k +
  1}$.

\item[(b)] \textit{$ \left((k,0),k \right) \rightarrow
  \left((k - 1, 0), k-1\right) $ }: \\ A type-$1$ individual
  reproduces and replaces a (sample) individual of type $0$. This
  occurs at rate $\frac{k (N-k)}{N}$ and leads to a sample that
  consists of all $k - 1$ fit individuals. The absorption
  probability, if we start in the new state, is $h^N_{k-1}$.

\item[(c)] \textit{$ \left((k,0),k\right) \rightarrow
  \left((k - 1, 1), k-1\right) $}:\\ This transition
  describes a mutation of a type-$0$ individual to type $1$ and occurs
  at rate $ku^{}_N\nu^{}_1$. The new sample contains all $k - 1$ fit
  individuals, plus a single unfit one. Starting now from
  $\bigl((k - 1, 1), k-1\bigr)$, the absorption probability
  has two contributions: First, by definition, with probability $
  h^N_{k-1} $, one of the $k -1$ fit individuals will be the common
  ancestor. In addition, by \eqref{unfitfix}, the single unfit
  individual has fixation probability $(1-h^N_{k-1})/(N-(k-1))$, so
  the probability to absorb in $\mathcal{A}^{}_N$ when starting from
  the new state is
\begin{align*}
&\mathbb{P} \left(  \text{absorption in} \ \mathcal{A}_N^{}  \mid \left(\boldsymbol{M}^{}_0,Z^N_0  \right) 
= \left((k - 1, 1), k-1\right)\right) \\
&=h^N_{k-1} + \frac{1-h^N_{k-1}}{N-(k-1)} .
\end{align*}

\item[(d)] \textit{$   \left((k,0),k\right) \rightarrow \left((k , 0), k+1\right)     $}: \\
This is a mutation from type $1$ to type $0$, which occurs 
at rate $(N-k)u^{}_N \nu^{}_0 $. 
We then have $k + 1$ fit individuals in the  population
altogether, but the new sample contains only $k$ of them. 
Arguing as in (c) and this time using \eqref{fitfix}, we get
\begin{align*}
&\mathbb{P} \left(  \text{absorption in} \ \mathcal{A}^{}_N \mid \left(\boldsymbol{M}^{}_0, Z^N_0 \right)
= \left((k , 0), k+1\right)\right)   \\
&=h^N_{k+1}  - \frac{h^N_{k+1}}{k+1} .
	\end{align*}
\end{description}
Note that, in steps (c) and (d) (and already in \eqref{fitfix} and \eqref{unfitfix}),
we have used the permutation invariance of the fit (respectively unfit)
lines to express the absorption probabilities as a function of $k$ (the number of
fit individuals in the population) alone. This way, we need not cope with the full
state space of $(\bM_t,Z^N_t)$.
Taking together the first-step principle with the results of (a)--(d),
we obtain the linear system of equations for  $h^N$ (with the rates $\lambda_k^N$ and $\mu_k^N$ as in \eqref{eq:lambda_mu}):
\begin{equation}
\left( \lambda^N_k + \mu^N_k \right) h_k^N = \lambda_k^N h^N_{k+1} + \mu_k^N h_{k-1}^N + k u_N^{} \nu_1^{} \frac{ 1 - h_{k-1}^N }{N-(k-1)} - (N-k) u_N^{} \nu_0^{} \frac{h_{k+1}^N}{k+1},
\label{h^Ngleichung}
\end{equation}
$0 < k < N$, which is complemented by the boundary conditions
$h_0^N=0$, $h_N^N=1$.
Rearranging results in 
\begin{equation}
\begin{split}
&\frac{1}{2} \frac{1}{N} \left( \lambda^N_k + \mu^N_k \right) N^2 \left(h^N_{k+1} - 2 h^N_{k} + h^N_{k-1} \right) \\
&+ \frac{1}{2}  \left( \lambda^N_k - \mu^N_k \right) \left( N \left( h^N_{k+1} - h^N_{k} \right) - N \left( h^N_{k-1} - h^N_{k}  \right)  \right) \\
&+ \frac{k}{N} \frac{N}{N-(k-1)} N u_N^{} \nu_1^{} \left( 1- h^N_{k-1} \right) - \frac{N-k}{N} \frac{N}{k+1} N u_N^{} \nu_0^{} h^N_{k+1}=0.
\end{split}
\label{h^Ngleichung2}
\end{equation}
Let us now consider a sequence $\left( k_N^{} \right)_{N \in
  \mathbb{N}}$ with $0 < k_N^{} < N$ and $\lim_{N \to \infty}
\frac{k_N^{}}{N} = x$. The probabilities
$h^N_{k_N}$ converge to $h(x)$ as $N \to \infty$ (for the stationary case a proof is given in the
Appendix). Equation (\ref{h^Ngleichung2}), with $k$ replaced by $k_N^{}$,
together with (\ref{drift coefficient}) and (\ref{diffusion
  coefficient}) leads to  
Taylor's boundary value problem (\ref{h DGL}).

\textbf{Equations for $\psi$.} As before, we consider $ \left(\boldsymbol{M}^{}_t, Z^N_t
\right)_{t \geqslant 0}$ with start in $\left( (k, 0), k
\right)$, and now introduce the new function $\psi^N_{k} := h^N_{k} -
\frac{k}{N}$. $\psi^N_{}$ is the part of the absorption probability in
$\mathcal{A}^{}_N$ that goes back to selective
reproductions (in comparison  to the neutral case). 
We therefore speak of $\psi^N_{}$ (as well as of $\psi$) as the
`extra' absorption probability.  

Substituting
$h^N_{k} =\psi^N_{k} + \frac{k}{N}$ in \eqref{h^Ngleichung} yields the
following difference equation for $\psi^N_{}$:
\begin{equation}
\begin{split}
\left( \lambda^N_k + \mu^N_k \right) \psi_k^N =
&\lambda_k^N \psi^N_{k+1} + \mu_k^N \psi_{k-1}^N + \frac{k(N-k)}{N²}s_N^{} \\
&- k u_N^{} \nu_1^{} \frac{ \psi_{k-1}^N }{N-(k-1)} - (N-k) u_N^{} \nu_0^{} \frac{\psi_{k+1}^N}{k+1}
\label{psi^N gleichung}
\end{split}
\end{equation}
$(0 < k < N)$, together with the boundary conditions $\psi_0^N = \psi_N^N = 0$.
It  has a nice
interpretation, which is completely analogous to that of $h^N$
except in case (a2): If one
of the fit sample individuals reproduces via a selective
reproduction event, the extra absorption probability is
$\psi^N_{k+1} + \frac{1}{N}$ (rather than $h^N_{k+1}$). Here, 
$\frac{1}{N}$ is the neutral fixation probability of the individual just
created via the selective event; $\psi^N_{k+1}$ is the
extra absorption probability of all $k + 1$ type-$0$
individuals present after the event. The neutral contribution gives rise to the $k(N-k)s_N/N^2$ term on the
right-hand side of \eqref{psi^N gleichung}.
Performing $N \to \infty$ in the same way as for $h$, we obtain
Taylor's boundary value problem (\ref{psi DGL}) and now have
an interpretation in terms of the graphical representation to go with it.

\subsection{Solution of the difference equation} 
\label{Solution of the difference equation}
In this Section, we derive an explicit expression for the fixation
probabilities $h^N_{k}$, that is, a solution of the difference
equation (\ref{h^Ngleichung}), or equivalently, (\ref{psi^N
  gleichung}). Although the calculations only involve standard
techniques, we perform them here explicitly since this yields
additional insight. Since there is no danger of confusion, we 
omit the subscript (or superscript) $N$ for economy of notation.

The following Lemma specifies the extra absorption
probabilities $\psi_{k}$ in terms of a recursion.
\begin{lemma}
Let $k \geqslant 1$. Then
\begin{equation}
 \psi_{N-k}= \frac{k(N-k)}{\mu_{N-k}^{}} \left( \frac{\mu^{}_{N-1}}{N-1} \psi_{N-1} + \frac{\lambda_{N-k+1}}{(k-1)(N-k+1)} \psi_{N-k+1} - \frac{ s (k-1)}{N²}\right).
 \label{psi rekursion}
\end{equation}

 \label{lemma rekursion}
\end{lemma}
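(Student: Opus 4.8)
The plan is to recast the difference equation \eqref{psi^N gleichung} as the statement that a suitable discrete flux is constant in its index, and then to identify that constant from the boundary condition $\psi_N^N = 0$. Writing $s,u,\nu_0,\nu_1$ for the (subscript-suppressed) parameters, I would introduce for $1 \leqslant l \leqslant N-1$ the quantity
\[
 Q_l := \frac{\mu_l\, \psi_l}{l\,(N-l)} - \frac{\lambda_{l+1}\, \psi_{l+1}}{(l+1)(N-l-1)} + \frac{s\,(N-l-1)}{N},
\]
with the convention that the middle term is read as $0$ when $l=N-1$ (it carries the formally singular factor $1/(N-l-1)$ but is multiplied by $\psi_N = 0$). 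Solving $Q_l = \frac{\mu_{N-1}}{N-1}\psi_{N-1}$ for $\psi_l$ and re-indexing by $l = N-k$ gives precisely \eqref{psi rekursion}; hence it suffices to prove that $Q_l$ does not depend on $l$ and equals $\frac{\mu_{N-1}}{N-1}\psi_{N-1}$.

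The core step is to show that \eqref{psi^N gleichung}, taken at index $k$, is equivalent to $Q_{k-1} = Q_k$. I would divide \eqref{psi^N gleichung} by $k(N-k)$ and regroup, collecting the two pruning contributions $-k u\nu_1\,\psi_{k-1}/(N-k+1)$ and $-(N-k)u\nu_0\,\psi_{k+1}/(k+1)$ with the $\mu_k \psi_{k-1}$ and $\lambda_k \psi_{k+1}$ terms, respectively. The decisive input is then the pair of algebraic identities
\[
 \mu_k - \frac{k u\nu_1}{N-k+1} = \frac{k(N-k)\,\mu_{k-1}}{(k-1)(N-k+1)}, \qquad \lambda_k - \frac{(N-k)u\nu_0}{k+1} = \frac{k(N-k)\,\lambda_{k+1}}{(k+1)(N-k-1)},
\]
both of which follow by inserting the explicit rates \eqref{eq:lambda_mu} and factoring $(k-1)$ out of $\mu_{k-1}$ and $(N-k-1)$ out of $\lambda_{k+1}$. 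These identities turn the coefficients of $\psi_{k-1}$ and $\psi_{k+1}$ into exactly the weights appearing in $Q_{k-1}$ and $Q_k$, while the two inhomogeneous terms $s(N-k)/N$ and $s(N-k-1)/N$ differ by the constant $s/N$ produced from the source term $k(N-k)s/N$; the rearranged equation is then literally $Q_{k-1}-Q_k = 0$. I expect this coefficient matching to be the main obstacle: one must recognise that dividing by $k(N-k)$ is the weighting that makes the birth, death, and pruning rates telescope, after which the verification is routine but hinges on the two identities above.

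Finally I would run the (essentially trivial) induction. Since \eqref{psi^N gleichung} holds for all $k$ with $0<k<N$, the relations $Q_{k-1}=Q_k$ for $k=2,\dots,N-1$ yield $Q_1 = Q_2 = \dots = Q_{N-1}$, so $Q_l$ is constant on $\{1,\dots,N-1\}$. Evaluating at $l = N-1$, where $N-l = 1$, the middle term drops out because $\psi_N = 0$ and the inhomogeneous term drops out because $N-l-1 = 0$, leaving $Q_{N-1} = \frac{\mu_{N-1}}{N-1}\psi_{N-1}$. Thus $Q_{N-k} = \frac{\mu_{N-1}}{N-1}\psi_{N-1}$ for every $1 \leqslant k \leqslant N-1$, which is \eqref{psi rekursion}; the boundary case $k=1$ reduces to the identity $\psi_{N-1} = \psi_{N-1}$, consistent with the singular term there being annihilated by $\psi_N = 0$.
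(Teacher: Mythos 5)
Your proposal is correct and is essentially the paper's own argument: the two coefficient identities you isolate are exactly the step by which the paper rewrites \eqref{psi^N gleichung}, after division by $k(N-k)$, into its equation \eqref{magic equation}, and your constancy of the flux $Q_l$ is just a repackaging of the paper's subsequent telescoping summation of \eqref{magic equation}--\eqref{magic equationN} over $i=N-k+1,\dots,N-1$, with the constant identified from the boundary equation at $i=N-1$ exactly as you do at $l=N-1$. The only difference is presentational (first-integral language versus summing the equations), so the two proofs coincide in substance.
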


\begin{remark}\label{remark raender}
The quantity $\lambda_k/(k(N-k)) = (1 + s)/N + u
  \nu_0^{}/k$ is well defined for all $1 \leqslant k \leqslant N$, and
 $k(N-k)/\mu_k^{} = (N-k)/(\frac{N-k}{N} + u
  \nu_1^{})$ is well defined even for $k =0$.
\end{remark}

\begin{proof}[Proof of Lemma $\ref{lemma rekursion}$]
Let $1 < i < N-1$. Set $k=i$ in (\ref{psi^N gleichung}) and
divide by $i(N-i)$ to obtain
\begin{align}
 \left( \frac{\lambda_i}{i(N-i)} + \frac{\mu_i^{}}{i (N-i)}  \right) \psi_i 
 &= \left( \frac{1+s}{N} + \frac{u \nu_0^{}}{i+1} \right) \psi_{i+1} + \left( \frac{1}{N} + \frac{u \nu_1^{}}{N-(i-1)} \right) \psi_{i-1} + \frac{s}{N²} \nonumber \\
 &= \frac{\lambda_{i+1}}{(i+1)(N-i-1)} \psi_{i+1} + \frac{\mu_{i-1}^{}}{(i-1)(N-i+1)} \psi_{i-1} + \frac{s}{N²}.
 \label{magic equation}
\end{align}
Together with
\begin{align}
 \left( \frac{\lambda_1}{N-1} + \frac{\mu_1^{}}{N-1}  \right) \psi_1 
 &= \frac{\lambda_{2}}{2(N-2)} \psi_{2} + \frac{s}{N²}, \label{magic equation1} \\
 \left( \frac{\lambda_{N-1}}{N-1} + \frac{\mu_{N-1}^{}}{N-1}  \right) \psi_{N-1} 
 &= \frac{\mu_{N-2}^{}}{2(N-2)} \psi_{N-2} + \frac{s}{N²}, \label{magic equationN}
\end{align}
and the boundary conditions $\psi_0 = \psi_N =0$, we obtain a new
linear system of equations for the vector $\psi=(\psi_k)_{0 \leqslant k \leqslant N}. $ 
Summation over the last $k$
equations yields
\begin{align*}
 \sum_{i=N-k+1}^{N-1} \left( \frac{\lambda_i}{i(N-i)} + \frac{\mu_i^{}}{i (N-i)}  \right) \psi_i 
 =& \sum_{i=N-k+1}^{N-2} \frac{\lambda_{i+1}}{(i+1)(N-i-1)} \psi_{i+1} \\
 &+ \sum_{i=N-k+1}^{N-1} \frac{\mu_{i-1}^{}}{(i-1)(N-i+1)} \psi_{i-1} + \frac{s(k-1)}{N²},
\end{align*}
which proves the assertion.

\end{proof}    \noindent
Lemma \ref{lemma rekursion} allows for an explicit solution
for $\psi$.
\begin{theorem}
For $1 \leqslant \ell,n \leqslant N-1$, let
\begin{equation}
\chi^n_{\ell}:=\prod_{i=\ell}^{n} \frac{\lambda_i}{\mu_i^{}} \text{ \ and \ }
K:=   \sum_{n=0}^{N-1}  \chi_1^n .
\label{K}
\end{equation}
The solution of recursion $(\ref{psi rekursion})$ is then given by
 \begin{equation}
  \psi_{N-k} = \frac{k(N-k)}{\mu_{N-k}^{}}  \sum_{n=N-k}^{N-1} \chi_{N-k+1}^n 
  \left( \frac{\mu_{N-1}^{}}{N-1} \psi_{N-1}  - \frac{s (N-1-n)}{N²} \right)
\label{psi_N-k}  
\end{equation}
with 
 \begin{equation}
 \psi_{N-1}  = \frac{1}{K} \frac{N-1}{\mu_{N-1}^{}} \frac{s}{N²}
 \sum_{n=0}^{N-2} (N-1-n)\chi_{1}^n . 
 \label{psi_N-1} 
\end{equation}
An alternative representation is given by
\begin{equation}
  \psi_{N-k} =\frac{1}{K} \frac{k(N-k)}{\mu_{N-k}^{}}  \frac{s}{N²}  
  \sum_{\ell=0}^{N-k-1} \sum_{n=N-k}^{N-1} (n-\ell) \chi_1^{\ell} \chi_{N-k+1}^n  .
 \label{psi_N-k alt}
\end{equation}
\label{Rekursionsloesung psi}
\end{theorem}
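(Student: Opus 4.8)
The plan is to treat the recursion in Lemma \ref{lemma rekursion} as a first-order inhomogeneous recursion in $k$ (equivalently in the index $N-k$) and solve it by the standard method of iterating and summing, then pin down the single free constant $\psi_{N-1}$ using the remaining boundary condition $\psi_0 = 0$. Concretely, I would first rewrite \eqref{psi rekursion} so that the quantity $\mu_{N-k}^{}\psi_{N-k}/\bigl(k(N-k)\bigr)$ appears on the left, since then the coefficient multiplying the previous term $\psi_{N-k+1}$ becomes exactly $\lambda_{N-k+1}/\bigl((k-1)(N-k+1)\bigr)$. Introducing the normalised variable $\phi_{N-k} := \mu_{N-k}^{}\psi_{N-k}/\bigl(k(N-k)\bigr)$ should turn \eqref{psi rekursion} into a telescoping form $\phi_{N-k} = \bigl(\lambda_{N-k+1}/\mu_{N-k+1}^{}\bigr)\phi_{N-k+1} + (\text{source})$, where the multiplicative factor is precisely the ratio $\lambda_i/\mu_i^{}$ that defines $\chi^n_\ell$ in \eqref{K}. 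This is exactly the structure whose solution is a sum of products of those ratios.

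From there I would iterate the first-order recursion downward from the anchor $\psi_{N-1}$ (i.e.\ $k=1$), collecting the accumulated products $\chi^n_{N-k+1}$ and the running source terms. The homogeneous part contributes the factor $\chi^n_{N-k+1}$ multiplying the boundary datum $\mu_{N-1}^{}\psi_{N-1}/(N-1)$, and the inhomogeneous $-s(k-1)/N^2$ sources accumulate into the $-s(N-1-n)/N^2$ weights seen in \eqref{psi_N-k}; the shift from $(k-1)$ to $(N-1-n)$ is just the reindexing $n = N-k+\ell$ that arises when one sums the source contributions at each intermediate step. This step is bookkeeping: I would verify the summation limits and the index shift carefully, since the precise range $n = N-k,\dots,N-1$ and the weight $N-1-n$ are where off-by-one errors live. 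Establishing \eqref{psi_N-k} in terms of the still-unknown $\psi_{N-1}$ is the routine half.

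The genuinely substantive step is determining $\psi_{N-1}$, i.e.\ proving \eqref{psi_N-1}. The idea is to impose the one boundary condition not yet used, namely $\psi_0 = 0$, by evaluating \eqref{psi_N-k} at $k=N$. Since $\mu_0^{}/(N\cdot 0)$ is problematic, I would instead use the well-definedness noted in Remark \ref{remark raender}: the combination $k(N-k)/\mu_{N-k}^{}$ behaves well even at the endpoint, so the vanishing of $\psi_0$ forces the bracketed sum at $k=N$ to vanish, which reads
\begin{equation*}
\sum_{n=0}^{N-1} \chi_1^n \left( \frac{\mu_{N-1}^{}}{N-1}\psi_{N-1} - \frac{s(N-1-n)}{N^2} \right) = 0.
\end{equation*}
Here the full homogeneous sum $\sum_{n=0}^{N-1}\chi_1^n$ is exactly $K$ from \eqref{K}, so solving for $\psi_{N-1}$ yields
\begin{equation*}
\frac{\mu_{N-1}^{}}{N-1}\psi_{N-1}\, K = \frac{s}{N^2}\sum_{n=0}^{N-1}(N-1-n)\chi_1^n = \frac{s}{N^2}\sum_{n=0}^{N-2}(N-1-n)\chi_1^n,
\end{equation*}
the last equality because the $n=N-1$ term carries weight $N-1-(N-1)=0$. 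Rearranging gives \eqref{psi_N-1}. The main obstacle I anticipate is precisely justifying this endpoint evaluation rigorously despite the apparent singularities in the individual factors $\lambda_k/\bigl(k(N-k)\bigr)$ and $k(N-k)/\mu_k^{}$ at $k=0$ and $k=N$; Remark \ref{remark raender} is clearly placed to handle exactly this, so I would lean on it to argue that every combination actually appearing stays finite.

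Finally, to obtain the alternative closed form \eqref{psi_N-k alt}, I would substitute the just-derived expression \eqref{psi_N-1} for $\psi_{N-1}$ back into \eqref{psi_N-k} and combine the two single sums into one double sum over $\ell$ and $n$. The homogeneous piece contributes $\chi_1^\ell \chi_{N-k+1}^n$ weighted by $n$ (from the $N-1-n'$ factor inside $\psi_{N-1}$, reindexed), while the inhomogeneous $-s(N-1-n)/N^2$ piece contributes the $-\ell$ part, so that the two merge into the single factor $(n-\ell)$ displayed in \eqref{psi_N-k alt}; the common prefactor $\frac{1}{K}\frac{k(N-k)}{\mu_{N-k}^{}}\frac{s}{N^2}$ is what remains after factoring. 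This is again algebraic rearrangement, so I expect no conceptual difficulty beyond keeping the two summation indices and the $(n-\ell)$ combination straight.
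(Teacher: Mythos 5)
Your overall route coincides with the paper's: you solve the recursion of Lemma \ref{lemma rekursion} by iterating downward from the anchor $\psi_{N-1}$ (the paper phrases this as induction over $k$, which is the same computation, and your normalisation $\mu_{N-k}^{}\psi_{N-k}/(k(N-k))$ does linearise it exactly as you claim), and you then obtain \eqref{psi_N-1} by evaluating \eqref{psi_N-k} at $k=N$, where $\psi_0=0$ and Remark \ref{remark raender} guarantees that the prefactor $k(N-k)/\mu_{N-k}^{}$ is finite and nonzero, so the bracketed sum must vanish and $K=\sum_{n=0}^{N-1}\chi_1^n$ appears exactly as you describe. Up to this point the proposal is correct and essentially identical to the paper's proof.

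The final step, however, has a genuine gap. Substituting \eqref{psi_N-1} into \eqref{psi_N-k} and writing the lone term $-s(N-1-n)/N^2$ as $-\frac{1}{K}\frac{s}{N^2}(N-1-n)\sum_{\ell=0}^{N-1}\chi_1^{\ell}$ produces
\begin{equation*}
\psi_{N-k}=\frac{1}{K}\,\frac{k(N-k)}{\mu_{N-k}^{}}\,\frac{s}{N^2}\sum_{\ell=0}^{N-1}\ \sum_{n=N-k}^{N-1}(n-\ell)\,\chi_1^{\ell}\chi_{N-k+1}^{n},
\end{equation*}
in which $\ell$ runs up to $N-1$, not up to $N-k-1$ as required by \eqref{psi_N-k alt}. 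The missing observation is that the portion with $N-k\leqslant\ell\leqslant N-1$ vanishes: there $\chi_1^{\ell}=\chi_1^{N-k}\chi_{N-k+1}^{\ell}$, so that portion equals $\chi_1^{N-k}\sum_{\ell,n=N-k}^{N-1}(n-\ell)\chi_{N-k+1}^{\ell}\chi_{N-k+1}^{n}$, which is zero because the summand is antisymmetric under exchanging $\ell$ and $n$. This cancellation is the one non-mechanical step in the rearrangement, and your plan, which dismisses the step as pure bookkeeping, does not account for it. (A small further slip: you attribute the contributions backwards; after the constants $N-1$ cancel, the $\psi_{N-1}$ piece supplies the $-\ell$ and the inhomogeneous piece supplies the $+n$ in the factor $(n-\ell)$, not the other way round. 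That one is inconsequential.)
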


\begin{proof}
We first prove \eqref{psi_N-k} by induction over $k$. 
For $k=1$, \eqref{psi_N-k} is easily checked to be true.
Inserting the induction hypothesis for some $k-1\geqslant 0$ into 
recursion \eqref{psi rekursion} yields
\begin{align*}\label{ind1}
 \psi_{N-k}= & \frac{k(N-k)}{\mu_{N-k}^{}}
 \Biggl[  \frac{\mu_{N-1}^{}}{N-1} \psi_{N-1} \\
& + \frac{\lambda_{N-k+1}}{\mu_{N-k+1}^{}}  \sum_{n=N-k+1}^{N-1} \chi_{N-k+2}^n
\left(  \frac{\mu_{N-1}^{}}{N-1} \psi_{N-1}  - \frac{s (N-1-n)}{N²} \right)  - \frac{s (k-1)}{N²} \Biggr],
\end{align*}
which immediately leads to \eqref{psi_N-k}. For $k=N$, \eqref{psi_N-k} 
gives \eqref{psi_N-1}, since $\psi_0 = 0$ and $k(N-k)/\mu_{N-k}^{}$ is well defined
by Remark \ref{remark raender}. 
We now check \eqref{psi_N-k alt} by inserting \eqref{psi_N-1}
into \eqref{psi_N-k} and then use the expression for $K$ as in \eqref{K}:
\begin{align*}
 \psi_{N-k}&= \frac{1}{K} \frac{k(N-k)}{\mu^{}_{N-k}} \frac{s}{N²}
 \sum_{n=N-k}^{N-1} \chi_{N-k+1}^n
 \Biggl[ \sum_{\ell = 0}^{N-1}(N-1-\ell)  \chi_1^{\ell}  
 - \sum_{\ell = 0}^{N-1}(N-1-n) \chi_1^{\ell} 
 \Biggr] \\
 &=\frac{1}{K} \frac{k(N-k)}{\mu_{N-k}^{}} \frac{s}{N²}
 \sum_{\ell = 0}^{N-1}
  \sum_{n=N-k}^{N-1}
  (n - \ell) \chi_1^{\ell} \chi_{N-k+1}^n.
\intertext{Then we split the first sum according to whether
$\ell \leqslant N-k-1$ or $\ell \geqslant N-k$, and use 
$\chi_1^{\ell} =\chi_1^{N-k} \chi_{N-k+1}^{\ell} $
in the latter case:}
\psi_{N-k}&= \frac{1}{K} \frac{k(N-k)}{\mu_{N-k}^{}} \frac{s}{N²}
\Biggl[\sum_{\ell = 0}^{N-k-1} \sum_{n=N-k}^{N-1}
(n - \ell) \chi_1^{\ell} \chi_{N-k+1}^n \\
& \hphantom{=} + \chi_1^{N-k}\sum_{\ell = N-k}^{N-1} \sum_{n=N-k}^{N-1}
(n - \ell) \chi_{N-k+1}^{\ell} \chi_{N-k+1}^n
\Biggr].
\end{align*}
The first sum is the right-hand side of \eqref{psi_N-k alt}
and the second sum disappears due to symmetry.

\end{proof} \noindent
Let us note that the fixation probabilities thus obtained have been well known
for the case with selection in the absence of mutation (see, e.g., \cite[Thm. 6.1]{Durrett}),
but to the best of our knowledge, have not yet appeared in the literature for
the case with mutation.

\subsection{The solution of the differential equation} 
As a little detour, let us revisit the boundary value problem
(\ref{h DGL}). To solve it, Taylor assumes 
that $h$ can be expanded in a power series in $\sigma$. This yields a 
recursive series of boundary value problems (for the various powers of
$\sigma$), which are solved by elementary methods and combined into a 
solution of $h$ (cf. \cite{Taylor}). 

However, the calculations are slightly long-winded. In what follows, we show that
the boundary value problem (\ref{h DGL}) (or equivalently (\ref{psi DGL})) may
be solved in a direct and elementary way, without the need for a series expansion.
Defining
\begin{equation*}
 c(x):= - \theta \nu^{}_1 \frac{x}{1 - x} - \theta \nu^{}_0 \frac{1-x}{x} 
\end{equation*}
and remembering the drift coefficient $a(x)$ (cf. (\ref{drift coefficient})) and
the diffusion coefficient $b(x)$ (cf. (\ref{diffusion coefficient})), differential
equation (\ref{psi DGL}) reads
\begin{equation*}
 \frac{1}{2} b(x) \psi''\left( x \right) + a(x) \psi'\left( x \right) + c(x) \psi(x) = - \sigma x (1-x)  
 \end{equation*}
 or, equivalently,
 \begin{equation}
  \psi'' \left( x \right) + 2 \frac{a(x)}{b(x)} \psi'\left(x\right) + 2 \frac{c(x)}{b(x)} \psi(x) = - \sigma.
\label{psi DGL 2}
\end{equation}
Since 
\begin{equation}
  \frac{c(x)}{b(x)} = \frac{d}{dx} \frac{a(x)}{b(x)}, 
 \label{alternating sum}
\end{equation}
(\ref{psi DGL 2}) is an exact differential equation (for the concept of exactness, see
\cite[Ch. 3.11]{Ford} or \cite[Ch. 2.6]{Birkhoff}). Solving it corresponds
to solving its primitive
\begin{equation}
 \psi'\left(x \right) + 2 \frac{a(x)}{b(x)} \psi(x) = - \sigma (x - \tilde{x}).
\label{DGL Stammfunktion}
\end{equation}
The constant $\tilde{x}$ plays the role of an integration constant and
will be determined by the initial conditions later. (Obviously, \eqref{psi DGL 2}
is recovered by differentiating \eqref{DGL Stammfunktion} and observing
\eqref{alternating sum}.) As usual, we consider the homogeneous equation
\begin{equation*}
 \varphi'\left(x \right) + 2 \frac{a(x)}{b(x)} \varphi(x) = \varphi'\left(x \right) + \left( \sigma - \frac{\theta \nu_1^{}}{1-x} + \frac{\theta \nu_0^{}}{x} \right)\varphi(x) =0
\end{equation*}
first. According to \cite[Ch. 7.4]{Durrett} and \cite[Ch. 4.3]{Ewens}, its solution $\varphi_1^{}$ is
given by
\begin{equation*}
 \varphi_1^{}(x) = \exp \left( \int^x -2 \frac{a(z)}{b(z)} dz \right) 
 = \gamma \left(1-x \right)^{- \theta \nu_1^{}} x^{- \theta \nu_0^{}} \exp(- \sigma x)
 = \frac{2C \gamma}{b(x) \pi_X^{}(x)}  .
\end{equation*}
(Note the link to the stationary distribution provided by the last
expression (cf. \cite[Thm. 7.8]{Durrett} and \cite[Ch. 4.5]{Ewens}).) 
Of course, the same expression 
is obtained via separation of variables. Again we will deal with the
constant $\gamma$ later. \\
Variation of parameters yields the solution $\varphi_2^{}$ of the
inhomogeneous equation (\ref{DGL Stammfunktion}):
\begin{equation}
 \varphi_2^{} (x)
 = \varphi_1^{}(x) \int_{\beta}^x \frac{- \sigma (p - \tilde{x})}{\varphi_1^{}(p)}dp 
 = \sigma \varphi_1^{} (x) \int_{\beta}^x \frac{\tilde{x} - p}{\varphi_1^{}(p)} dp.
\label{solution phi}
\end{equation}
Finally, it remains to specify the constants of integration $\tilde{x}$,
$\gamma$ and the constant $\beta$ to comply with $\varphi_2^{}(0)= \varphi_2^{}(1)=0$.
We observe that the factor $\gamma$ cancels in (\ref{solution phi}), thus 
its choice is arbitrary. $\varphi_1^{}(x)$ diverges for $x \to 0$ and 
$x \to 1$, so the choice of $\beta$ and $\tilde{x}$ has to guarantee
$B(0)=B(1)=0$, where $B(x)=\int_{\beta}^x \frac{\tilde{x} - p}{\varphi_1^{}(p)} dp$.
Hence, $\beta =0$ and
\begin{equation*}
 \tilde{x} \int_0^1 \frac{1}{\varphi_1^{}(p)}dp 
 = \int_0^1 \frac{p}{\varphi_1^{}(p)}dp \ \ \Leftrightarrow \ \ \tilde{x} 
 = \frac{\int_0^1 \frac{p}{\varphi_1^{}(p)}dp }{\int_0^1 \frac{1}{\varphi_1^{}(p)}dp}.
\end{equation*}
For the sake of completeness, l'H\^{o}pital's rule can be used to check that
$\varphi_2^{}(0)= \varphi_2^{}(1)=0$. The result indeed coincides with Taylor's (cf. (\ref{solution h})).\\
We close this Section with a brief consideration of the initial value $a_1$
of the recursions \eqref{a_n}. Since, by \eqref{a_1}, $a_1 = - \psi'(1)$,
it may be obtained by analysing the limit $x \to 1$ of \eqref{DGL Stammfunktion}.
In the quotient $a(x)\psi(x)/b(x)$, numerator and denominator disappear
as $x \to 1$. According to l'H\^{o}pital's rule, we get
\begin{align*}
 \lim_{x \to 1} \frac{a(x)\psi(x)}{b(x)} 
=\lim_{x \to 1} \frac{(- \theta \nu_0^{} - \theta \nu_1^{} + \sigma (1-2x)) \psi(x) + a(x)\psi'(x) }{2(1-2x)} 
   = \frac{1}{2} \theta \nu_1^{} \psi'(1),
\end{align*}
therefore, the limit $x \to 1$ of \eqref{DGL Stammfunktion} yields
\begin{equation*}
- \psi'(1)(1 + \theta \nu_1^{})= \sigma(1-\tilde{x}).
\end{equation*}
Thus, we obtain $a_1$ 
without the need to differentiate expression \eqref{solution h}.

\section{Derivation of Fearnhead's coefficients in the discrete setting}
\label{Derivation of Fearnhead's coefficients in the discrete setting}
Let us now turn to the ancestral type distribution and Fearnhead's 
coefficients that characterise it. To this end, we start from the 
linear system of equations for $\psi^N=(\psi^N_k)_{0 \leqslant k \leqslant N} $ in 
\eqref{magic equation}-\eqref{magic equationN}. Let 
\begin{equation}\widetilde{\psi}^N_k:=\frac{ \psi^N_k}{k(N-k)} , 
 \label{psi tilde}
 \end{equation}
for $1 \leqslant k \leqslant N-1$. 
In terms of these new variables, \eqref{magic equationN} reads
\begin{equation}
 -\mu^N_{N-1} \widetilde{\psi}^N_{N-1}+\mu^N_{N-2} \widetilde{\psi}^N_{N-2} - \lambda^N_{N-1} \widetilde{\psi}^N_{N-1}
 + \frac{s_N^{}}{N^2} =0 .
 \label{magic equationN 2}
\end{equation}
We now perform linear combinations of \eqref{magic equation}-\eqref{magic equationN} 
(again expressed in terms of the $\widetilde{\psi}_{N-k}^N$) to obtain
\begin{equation}
 \begin{split}
  &\sum_{k=1}^{n-1} (-1)^{n-k-1} \binom{n-2}{k-1} (\lambda^N_{N-k} + \mu^N_{N-k}) \widetilde{\psi}^N_{N-k} \\
 &= \sum_{k=2}^{n-1} (-1)^{n-k-1} \binom{n-2}{k-1} \lambda^N_{N-k+1}  \widetilde{\psi}^N_{N-k+1} 
+ \sum_{k=1}^{n-1} (-1)^{n-k-1} \binom{n-2}{k-1} \mu^N_{N-k-1}  \widetilde{\psi}^N_{N-k-1} \\
 & \hphantom{=}+ \frac{s_N^{}}{N^2} \sum_{k=1}^{n-1} (-1)^{n-k-1} \binom{n-2}{k-1},
\label{sum magic equation}
\end{split}
\end{equation}
for $ 3 \leqslant n \leqslant N-1$. Noting that the last sum disappears as a consequence of the binomial theorem,
rearranging turns \eqref{sum magic equation} into
\begin{align}
 \sum_{k=0}^{n-1} (-1)^{n-k-1} \binom{n-1}{k}  \mu^N_{N-k-1} \widetilde{\psi}^N_{N-k-1}
 + \sum_{k=1}^{n-1} (-1)^{n-k} \binom{n-1}{k}  \lambda^N_{N-k} \widetilde{\psi}^N_{N-k}
 =0.
 \label{magic equation 2}
\end{align}
On the basis of equations \eqref{magic equationN 2}
and \eqref{magic equation 2} for $(\widetilde{\psi}^N_k)_{1 \leqslant k \leqslant N-1} $
we will now establish a discrete version of Fearnhead's coefficients, and a corresponding discrete version of recursion 
\eqref{a_n} and initial value \eqref{a_1}. Motivated by the limiting expression \eqref{h solution summe},
we choose the ansatz 
\begin{equation}
 \psi^N_{N-k} = (N-k) \sum_{i=1}^{k} a_i^N \frac{k_{[i]}}{N_{[i+1]}},
 \label{ansatz psi}
\end{equation}
where we adopt the usual notation $x_{[j]}:=x (x-1) \dots (x-j+1)$ for 
$x \in \mathbb{R}$, $j \in \mathbb{N}$.
Again we omit the upper (and lower) population size index $N$ (except for the one of the $a_n^N$) in 
the following Theorem.
\begin{theorem}
The $a_n^N$, $1  \leqslant n \leqslant N-1 $, satisfy the following relations: $a_1^N = N \psi_{N-1}$,
\begin{equation}
(N-2) \left[\left(\frac{2}{N} + u \nu_1^{}\right) a_2^N - 
\left(\frac{2}{N} + \frac{N-1}{N}s + u\right) a_1^N + \frac{N-1}{N}s\right]=0,
\label{diskrete Rekursion 2}
\end{equation}
and, for $3 \leqslant n \leqslant N-1$:
\begin{equation}
(N-n) \left[\left(\frac{n}{N} + u \nu_1^{}\right) a_n^N - 
\left(\frac{n}{N} + \frac{N-(n-1)}{N}s + u\right) a_{n-1}^N + \frac{N-(n-1)}{N}s a_{n-2}^N\right]=0.
\label{diskrete Rekursion n}
\end{equation}
\label{Theorem diskrete Rekursion}
\end{theorem}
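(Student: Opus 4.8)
\emph{Sketch of the intended approach.} The plan is to substitute the ansatz \eqref{ansatz psi} into the two families \eqref{magic equationN 2} and \eqref{magic equation 2} and to read off the recursion coefficient by coefficient in the $a_i^N$. Since $\lvert\boldsymbol m\rvert=N$ gives $N-(N-k)=k$, the substitution \eqref{psi tilde} turns \eqref{ansatz psi} into
\[
\widetilde\psi_{N-k}=\frac{1}{k}\sum_{i=1}^{k}a_i^N\frac{k_{[i]}}{N_{[i+1]}}=\sum_{i=1}^{k}a_i^N\frac{(k-1)_{[i-1]}}{N_{[i+1]}},
\]
where I use $k_{[i]}=k\,(k-1)_{[i-1]}$. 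The initial value $a_1^N=N\psi_{N-1}$ is immediate from the $k=1$ case of \eqref{ansatz psi}, and \eqref{diskrete Rekursion 2} follows by inserting the resulting expressions for $\widetilde\psi_{N-1}$ and $\widetilde\psi_{N-2}$ into \eqref{magic equationN 2} together with the explicit rates \eqref{eq:lambda_mu}; this is a finite computation that I would carry out directly.

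For the general relation \eqref{diskrete Rekursion n} I would substitute the displayed form of $\widetilde\psi$ into \eqref{magic equation 2}, interchange the $k$- and $i$-summations, and collect the coefficient of each $a_i^N$. The factor $(k-1)_{[i-1]}$ forces the terms with $k\le i-1$ to vanish, and the second ($\lambda$-)sum may be extended down to $k=0$ because $\lambda_N=0$; hence both coefficient sums become full alternating binomial sums over $k=0,\dots,n-1$, entering with opposite signs. The crucial observation is that for any polynomial $P$ one has $\sum_{k=0}^{n-1}(-1)^{n-1-k}\binom{n-1}{k}P(k)=(n-1)!\,c_{n-1}$, where $c_{n-1}$ is the coefficient of $k_{[n-1]}$ in the expansion of $P$ in the falling-factorial basis. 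Because $\lambda_{N-k}$ and $\mu_{N-k-1}$ are quadratic in $k$, the relevant polynomials $\lambda_{N-k}(k-1)_{[i-1]}$ and $\mu_{N-k-1}k_{[i-1]}$ have degree $i+1$, so their $k_{[n-1]}$-coefficient vanishes unless $i\ge n-2$. This is exactly the mechanism by which the $a$-series collapses to a three-term recursion: only $i\in\{n-2,n-1,n\}$ survive, which I regard as the conceptual heart of the argument.

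It then remains to evaluate the three surviving $k_{[n-1]}$-coefficients, which I expect to be the main computational obstacle. Using $k\,k_{[j]}=k_{[j+1]}+j\,k_{[j]}$ repeatedly to rewrite the products in the falling-factorial basis, one extracts these coefficients explicitly; a short algebraic simplification (for instance $(N-1)+(n-1)(N-n-1)=n(N-n)$) collapses the $a_n^N$-coefficient to $(N-n)\bigl(\frac{n}{N}+u\nu_1\bigr)$, the $a_{n-1}^N$-coefficient to $-(N-n)\bigl(\frac{n}{N}+\frac{N-(n-1)}{N}s+u\bigr)$ and the $a_{n-2}^N$-coefficient to $(N-n)\frac{N-(n-1)}{N}s$, each carrying the common nonzero prefactor $(n-1)!/N_{[n+1]}$ once the differing denominators $N_{[i+1]}$ are reconciled through $N_{[n+1]}=N_{[n]}(N-n)=N_{[n-1]}(N-n+1)(N-n)$. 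Dividing out this prefactor reproduces \eqref{diskrete Rekursion n}. The bulk of the work thus lies in this bookkeeping, whereas the substantive idea is the finite-difference collapse described above.
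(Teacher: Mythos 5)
Your proposal is correct and follows essentially the same route as the paper's proof: insert the ansatz \eqref{ansatz psi} into \eqref{magic equationN 2} and \eqref{magic equation 2}, interchange the $k$- and $i$-sums, collapse the equation to the three terms $i\in\{n-2,n-1,n\}$ by a finite-difference annihilation argument, and evaluate the survivors (your claimed coefficient values, the common prefactor $(n-1)!/N_{[n+1]}$, and the identity $(N-1)+(n-1)(N-n-1)=n(N-n)$ all check out against the paper's $A^n_{\mu,i}$, $A^n_{\lambda,i}$). The only difference is organisational: where the paper rewrites each $k$-sum as an $(n-i)$th (resp.\ $(n-1-i)$th) difference quotient of the quadratic $k\mapsto\mu_k$ (resp.\ the affine $k\mapsto\lambda_k/(N-k)$), you view it as the $(n-1)$th difference at zero of the product polynomial and extract its leading falling-factorial coefficient --- an equivalent mechanism with the same vanishing threshold.
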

\begin{proof}
At first, we note that the initial value $a_1^N$ follows directly from 
\eqref{ansatz psi} for $k=1$. 
Then we remark that, by \eqref{psi tilde} and \eqref{ansatz psi},
\begin{equation}
 \widetilde{\psi}_{N-k} = \frac{1}{k} \sum_{i=1}^{k} a_i^N \frac{k_{[i]}}{N_{[i+1]}}
 \label{ansatz}
\end{equation}
for $1 \leqslant k \leqslant N-1$. To prove \eqref{diskrete Rekursion 2}, we insert this into
\eqref{magic equationN 2} and write the resulting equality as
\begin{align*}
 \mu^{}_{N-2}a_2^N - (\mu^{}_{N-1}- \mu^{}_{N-2} + \lambda_{N-1})(N-2)a_1^N + \frac{(N-1)(N-2)}{N} s =0,
\end{align*}
which is easily checked to coincide with \eqref{diskrete Rekursion 2}. 

To prove \eqref{diskrete Rekursion n},
we express \eqref{magic equation 2} in terms of the $a_n^N$ via \eqref{ansatz}.
The first sum of \eqref{magic equation 2} becomes
\begin{align*}
  \sum_{k=0}^{n-1} (-1)^{n-k-1} \binom{n-1}{k}  \mu_{N-k-1}^{} \widetilde{\psi}_{N-k-1}
 & = \sum_{k=0}^{n-1} (-1)^{n-k-1} \binom{n-1}{k}  \mu_{N-k-1}^{} \sum_{i=1}^{k+1} a_i^N \frac{k_{[i-1]}}{N_{[i+1]}}  \\
 & = \sum_{i=1}^{n} a_i^N \sum_{k=i}^{n} (-1)^{n-k} \binom{n-1}{k-1}   \frac{(k-1)_{[i-1]}}{N_{[i+1]}} \mu_{N-k}^{} .
\end{align*}
Analogously, the second sum of \eqref{magic equation 2} turns into
\begin{align*}
 \sum_{k=1}^{n-1} (-1)^{n-k} \binom{n-1}{k}  \lambda_{N-k} \widetilde{\psi}_{N-k}
 = \sum_{i=1}^{n-1} a_i^N \sum_{k=i}^{n-1} (-1)^{n-k} \binom{n-1}{k}   \frac{(k-1)_{[i-1]}}{N_{[i+1]}} \lambda_{N-k} .
\end{align*}
Multiplying with $N!$, \eqref{magic equation 2} is thus reformulated as 
\begin{align}
  \sum_{i=1}^{n} a_i^N (N-i-1)_{[n-i]} (A_{\mu , i}^{n} + A_{\lambda , i}^{n})=0,
  \label{diskrete Rekursion}
\end{align}
where
\begin{align}
 A_{\mu , i}^{n} &:= \sum_{k=i}^{n} (-1)^{n-k} \binom{n-1}{k-1}   (k-1)_{[i-1]}  \label{A_mu} \mu_{N-k}^{} , \displaybreak[0] \\
 A_{\lambda , i}^{n} & := \sum_{k=i}^{n-1} (-1)^{n-k} \binom{n-1}{k}  (k-1)_{[i-1]}  \lambda_{N-k} . \label{A_lambda}
\end{align}
It remains to evaluate the $A_{\mu , i}^{n}$ and the $  A_{\lambda , i}^{n}$ for $1 \leqslant i \leqslant n$. 
First, we note that
$$
\binom{n-1}{k-1} (k-1)_{[i-1]} = \frac{(n-1)!}{(n-i)!} \binom{n-i}{k-i}
$$
for $i \leqslant k \leqslant n$ and apply this to the right-hand side of 
\eqref{A_mu}. This results in
\begin{align*}
 A_{\mu , i}^{n} &= \frac{(n-1)!}{(n-i)!} \sum_{k=i}^{n} (-1)^{n-k} \binom{n-i}{k-i}  \mu_{N-k}^{} 
 = \frac{(n-1)!}{(n-i)!} \sum_{k=0}^{n-i} (-1)^{k} \binom{n-i}{k}  \mu_{N-n+k}^{},
\end{align*}
where the sum corresponds to the $(n-i)$th difference quotient of the mapping
$$
\mu : \{ 0, \dots , N \}\rightarrow \mathbb{R}_{\geqslant 0}, \quad k \mapsto \mu_k^{} = - \frac{k^2}{N} + k(1+u \nu_1^{})
$$
taken at $N-n$. Since $\mu$ is a quadratic function, we conclude that $A_{\mu, i}^n=0$
for all $1 \leqslant i \leqslant n-3$. In particular, in the second difference quotient
(i.e., $i=n-2$) the linear terms cancel each other and 
$A^n_{\mu, n-2}$ simplifies to
\begin{align*}
A_{\mu, n-2}^n &= 
\frac{(n-1)!}{2} \big[ \mu_{N-n}^{} - 2 \mu_{N-n+1}^{} + \mu_{N-n+2}^{} \big] \displaybreak[0] \\
&= \frac{(n-1)!}{2} \big[- (N-n)^2 + 2 (N-n+1)^2 - (N-n+2)^2 \big]
=- \frac{(n-1)!}{N}  .
\end{align*}
For the remaining quantities $A_{\mu , n-1}^{n}$ and $A_{\mu , n}^{n}$, we have
$$
A_{\mu , n-1}^{n} = (n-1)! (\mu_{N-n}^{} - \mu_{N-n+1}^{})
= (n-1)! \left( \frac{1}{N} (N-2n+1) - u \nu_1^{} \right)
$$
and
$$
A_{\mu , n}^{n} = (n-1)! \mu_{N-n}^{} = (n-1)! (N-n) \left(\frac{n}{N} + u \nu_1^{}\right).
$$
We now calculate the $A^n_{\lambda, i}$. Since
$$
\binom{n-1}{k} (k-1)_{[i-1]} = \frac{1}{k} \frac{(n-1)!}{(n-1-i)!} \binom{n-1-i}{k-i}
$$
for $i \leqslant k \leqslant n-1$, we obtain that
\begin{align*}
A^n_{\lambda, i} 
&= \frac{(n-1)!}{(n-1-i)!} \sum_{k=i}^{n-1} (-1)^{n-k} \binom{n-1-i}{k-i} \frac{ \lambda_{N-k} }{k} \displaybreak[0] \\
&= \frac{(n-1)!}{(n-1-i)!} \sum_{k=0}^{n-1-i} (-1)^{k+1} \binom{n-1-i}{k} \frac{ \lambda_{N-(n-1-k)} }{n-1-k} ,
\end{align*}
where the sum now coincides with the $(n-1-i)$th difference quotient of the
affine function
$$
\lambda : \{ 0, \dots, N-1 \} \rightarrow \mathbb{R}_{\geqslant 0}, \quad k \mapsto \frac{\lambda_{k}}{N-k} = \frac{k}{N} (1+s) + u \nu_0^{}
$$
taken at $N-(n-1)$. Consequently, $A^n_{\lambda, i}=0$ for all 
$1 \leqslant i \leqslant n-3$, and in $A^n_{\lambda, n-2}$ (more precisely in the first difference quotient of
$\lambda$ at $N-(n-1)$) the constant terms cancel each other. Thus,
\begin{align*}
A^n_{\lambda, n-2}
&= (n-1)! \left[ - \frac{\lambda_{N-(n-1)}}{n-1} + \frac{\lambda_{N-(n-2)}}{n-2} \right] \displaybreak[0] \\
&= (n-1)! \frac{1+s}{N} [N- (n-2) -(N- (n-1)) ]  
= (n-1)! \frac{1+s}{N}
\end{align*}
and so
$$
A_{\lambda, n-1}^n 
= - (n-1)! \frac{\lambda_{N-(n-1)}}{n-1}
= -(n-1)! \left[ \frac{N-(n-1)}{N} (1+s) + u \nu_0^{}\right].
$$
Combining \eqref{diskrete Rekursion} with the results for $A^n_{\mu, i}$
and $A^n_{\lambda, i}$ yields the assertion \eqref{diskrete Rekursion n}.

\end{proof}
It will not come as a surprise now that
the discrete recursions of the $a_n^N$ obtained in Thm. \ref{Theorem diskrete Rekursion}
lead to Fearnhead's coefficients $a_n$ in the limit $N \to \infty$. According to Thm. 
\ref{Konvergenztheorem} in the Appendix, $ \psi^N_{k_N}$ converges to $\psi(x)$ 
for 
any given sequence $\left( k_N^{} \right)_{N \in \mathbb{N}}$ with $0 < k_N^{} < N$
and $\lim_{N \to \infty} \frac{k_N}{N} = x$.
Comparing \eqref{ansatz psi} with \eqref{h solution summe},
we obtain
$$
\lim_{N \to \infty} a_n^N = a_n
$$
for all $n \geqslant 1$. The recursions \eqref{a_n} of Fearnhead's coefficients 
then follow directly from the recursions in Thm. \ref{Theorem diskrete Rekursion}
in the limit $N \to \infty$.

\section{Discussion}
More than fifteen years after the discovery of the ancestral selection graph
by Neuhauser and Krone \cite{Krone,Neuhauser},
ancestral processes with selection constitute an active area of
research, see, e.g., the recent contributions
\cite{EtheridgeGriffiths,EtheridgeGriffithsTaylor,Mano,Pokalyuk,Vogl}.  
Still, the ASG  remains a challenge:
Despite the elegance and intuitive appeal of the concept, it is difficult to
handle when it comes to concrete applications. Indeed,
only very few properties of
genealogical processes in mutation-selection balance could be
described explicitly until today (see the conditional ASG
\cite{Wakeley,WakeleySargsyan} for an example).
Even the special case of a single ancestral line (emerging from a
sample of size one) is not yet fully understood. The
work by Fearnhead \cite{Fearnhead} and Taylor \cite{Taylor} established
important results about the CAP 
with the help of diffusion theory and analytical tools,
but the particle representation can only be partially recovered behind the
continuous limit.
In this article, we have therefore made a first step towards
complementing the picture by
attacking the problem from the discrete (finite-population) side.
Let us briefly summarise our
results.

The pivotal quantity considered here is the \emph{fixation probability
of the offspring of all fit individuals}, regardless of the types of the offspring.
Starting from the particle picture and using  elementary arguments
of first-step analysis, we obtained a difference equation for these
fixation probabilities. In the limit $N \to \infty$, the equation
turns into the (second-order ODE) boundary problem obtained via
diffusion theory by Taylor \cite{Taylor}, but now with an intuitive
interpretation attached to it.

We have given the solution of the difference equation in closed form;
the resulting fixation
probabilities provide a generalisation of the well-known finite-population
fixation probabilities in the case with selection only (note that they do not 
require the population to be stationary). As a little
detour, we also revisited the limiting continuous boundary value problem
and solved it via elementary methods, without the need of the series
expansion employed previously.

The fixation probabilities are intimately related with the
stationary type distribution on the ancestral line and can thus
be used for an alternative derivation of the recursions
that characterise Fearnhead's coefficients.
Fearnhead obtained these recursions by guessing and
direct (but technical) verification of the stationarity
condition; Taylor derived them in a
constructive way by inserting the ansatz \eqref{v} into the boundary value
problem \eqref{psi DGL} and performing a somewhat tedious differentiation
exercise. Here we have taken a third route 
that relies on the difference equation \eqref{magic equation} and stays
entirely within the discrete setting.

Altogether, the finite-population results 
contain more information than those obtained within
the diffusion limit; first, because they are not restricted to weak
selection, and second, because they are more directly related to the
underlying particle picture.
Both motivations also underlie, for example,
the recent work by Pokalyuk and Pfaffelhuber
\cite{Pokalyuk}, who re-analysed the process of
fixation under strong selection (in the absence of mutation)
with the help of an ASG. 

Clearly, the present article is only a first step towards a better
understanding
of the particle picture related to the common ancestor process.
It is known already that the coefficients $a_n$ may be interpreted
as the probabilities that there are $n$ virtual branches in the
pruned ASG at stationarity (see Section \ref{Fearnhead's approach});
but the genealogical content of the  recursions (15)
remains to be elucidated. It would also be desirable to
generalise the results to finite type spaces, in the spirit
of Etheridge and Griffiths \cite{EtheridgeGriffiths}.

\subsection*{Acknowledgement}
It is our pleasure to thank Anton Wakolbinger for enlightening discussions, 
and for Fig.~\ref{aline}. We are grateful to Jay Taylor for valuable comments on the manuscript,
and to Barbara Gentz for pointing out a gap in an argument at an earlier stage of the work.
This project received financial support by Deutsche Forschungsgemeinschaft
(DFG-SPP 1590), Grant no. BA2469/5-1.

\section*{Appendix}
In Section \ref{Derivation of differential equations} we have presented
an alternative derivation of the boundary value problem for the conditional
probability $h$. It remains to prove that $\lim_{N \to \infty} h^N_{k_N} = h(x)$,
with $x \in [0,1]$, $0 < k_N^{} < N$, $\lim_{N \to \infty} \frac{k_N}{N} = x$ 
and $h$ as given as in \eqref{solution h}. \\
Since $h_k^N = \frac{k}{N} + \psi^N_k$ and $h(x)=x + \psi(x)$, respectively, 
it suffices to show the corresponding convergence of the $\psi^N_k$.
For ease of exposition, we assume here that the process is stationary.
\begin{lemma}
Let $\tilde{x}$ be as in \eqref{x schlange}. Then
\begin{equation*}
\lim_{N \to \infty} N \psi^N_{N-1} = \frac{\sigma}{1 + \theta \nu_1^{}}(1-\tilde{x}).
\end{equation*}
\label{Konvergenz N psi N-1}
\end{lemma}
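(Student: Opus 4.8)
The plan is to take the limit directly in the closed-form expression \eqref{psi_N-1} for $\psi_{N-1}=\psi^N_{N-1}$, writing $s=s_N$, $u=u_N$ and recalling the scaling $Ns_N\to\sigma$, $Nu_N\to\theta$. Multiplying \eqref{psi_N-1} by $N$ and regrouping the powers of $N$, I would first bring it into the form
\begin{equation*}
N\psi_{N-1}=\frac{(N-1)s_N}{\mu_{N-1}}\cdot\frac{1}{NK}\sum_{n=0}^{N-2}(N-1-n)\,\chi_1^n,
\end{equation*}
so that the problem splits into two independent limits. The scalar prefactor is elementary: inserting $\mu_{N-1}=(N-1)\big(\tfrac1N+u_N\nu_1\big)$ from \eqref{eq:lambda_mu} gives $(N-1)s_N/\mu_{N-1}=Ns_N/(1+Nu_N\nu_1)\to\sigma/(1+\theta\nu_1)$, which is exactly the outer factor of the claim. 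It therefore remains to show that the normalized sum converges to $1-\tilde x$.

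For the normalized sum I would exploit a telescoping identity linking $\chi_1^n$ of \eqref{K} to the Moran stationary weights \eqref{stationaere Verteilung Moran}. Since $\chi_1^n=\prod_{i=1}^n\lambda_i/\mu_i$ while $\pi^N_Z(n)=C_N\prod_{i=1}^n\lambda_{i-1}/\mu_i$, the quotient of the two products telescopes to $\lambda_n/\lambda_0$, so that $\chi_1^n=(\lambda_0 C_N)^{-1}\lambda_n\,\pi^N_Z(n)$ with an $n$-independent constant. This constant cancels between numerator and denominator, and after extracting one factor of $N$ from each $\lambda_n$ the normalized sum becomes a ratio of $\pi^N_Z$-expectations,
\begin{equation*}
\frac{1}{NK}\sum_{n=0}^{N-2}(N-1-n)\,\chi_1^n
=\frac{\EE_{\pi^N_Z}\!\big[\tfrac{N-1-n}{N}\,\tfrac{\lambda_n}{N}\big]}{\EE_{\pi^N_Z}\!\big[\tfrac{\lambda_n}{N}\big]},
\end{equation*}
where the omitted term $n=N-1$ (and the value $\lambda_N=0$) contribute nothing.

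To finish I would pass to the limit using that, under the diffusion scaling, the law $\pi^N_Z$ pushed forward by $n\mapsto n/N$ converges weakly to Wright's density $\pi_X$ of \eqref{wright}, together with the uniform convergences $\lambda_n/N\to x(1-x)$ and $(N-1-n)/N\to 1-x$ whenever $n/N\to x$ (both immediate from \eqref{eq:lambda_mu}, as $s_N,u_N\to0$). Because the relevant test functions $x(1-x)$ and $(1-x)x(1-x)$ are bounded and continuous on $[0,1]$, weak convergence applies with no boundary subtlety, and the quotient tends to
\begin{equation*}
\frac{\EE_{\pi_X}[X(1-X)^2]}{\EE_{\pi_X}[X(1-X)]}=1-\tilde x,
\end{equation*}
the last equality being \eqref{x schlange}. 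Combining the two limits gives the assertion. I expect the only genuinely delicate ingredient to be the weak convergence $\pi^N_Z\Rightarrow\pi_X$: either one invokes it as a standard fact about the diffusion approximation, or one derives it from the explicit product \eqref{stationaere Verteilung Moran} by Stirling/Euler--Maclaurin estimates. The same asymptotics in fact yield $\chi_1^{\lfloor Nx\rfloor}\sim N^{\theta\nu_0}x^{\theta\nu_0}(1-x)^{\theta\nu_1}e^{\sigma x}$ directly, which provides a self-contained alternative via Riemann sums, the $N^{\theta\nu_0}$ prefactor cancelling in the ratio.
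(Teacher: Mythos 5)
Your proposal is correct and follows essentially the same route as the paper's own proof: both factor out $(N-1)s_N/\mu_{N-1}=Ns_N/(1+Nu_N\nu_1)\to\sigma/(1+\theta\nu_1)$, convert the sum $\sum_n (N-1-n)\chi_1^n$ into stationary-measure expectations via the telescoping identity \eqref{statVtlg} (your $\lambda_n$-weighted form and the paper's $\mu_{n+1}$-weighted form are the same identity, by detailed balance $\pi^N_Z(n)\mu^N_n=\pi^N_Z(n-1)\lambda^N_{n-1}$), and pass to the limit using weak convergence of $\pi^N_Z$ under $n\mapsto n/N$ to Wright's density, yielding $\EE_{\pi_X}(X(1-X)^2)/\EE_{\pi_X}(X(1-X))=1-\tilde{x}$. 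The only cosmetic difference is the choice of weights, which if anything makes your version marginally cleaner near the boundary $n\approx N$.
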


\begin{proof}
Since the stationary distribution $\pi^N_Z$ of $\left( Z^N_{t} \right)_{t \geqslant 0}$
(cf. \eqref{stationaere Verteilung Moran}) satisfies 
\begin{equation}
\prod_{i=1}^{n-1} \frac{\lambda_i^{N}}{\mu_i^{N}} 
= \frac{\pi^N_Z (n)}{C^{}_N} \frac{\mu^N_n}{\lambda^N_0}, 
\label{statVtlg}
\end{equation}
for $1 \leqslant n \leqslant N$, equation \eqref{psi_N-1} leads to
\begin{align*}
N \psi^N_{N-1} 
 &= \frac{Ns_N^{}}{1 + N u_N^{} \nu_1^{}}  \frac{\sum_{n=1}^{N} \pi^N_Z(n) \mu^N_{n} \frac{N-n}{N}}{\sum_{n=1}^{N}\pi^N_Z(n) \mu^N_{n}} \displaybreak[0] \\
 &=\frac{Ns_N^{}}{1 + N u_N^{} \nu_1^{}}  \frac{\sum_{n=1}^{N} \pi^N_Z(n) \frac{n(N-n)^2}{N^3} \left( 1 + \frac{N u_N^{} \nu_1^{}}{N-n} \right)}{\sum_{n=1}^{N}\pi^N_Z(n) \frac{n(N-n)}{N^2} \left( 1 + \frac{N u_N^{} \nu_1^{}}{N-n} \right) } ,
\end{align*}
where we have used \eqref{eq:lambda_mu} in the last step.
The stationary distribution of the rescaled process $\left( X^N_{t} \right)_{t \geqslant 0}$
is given by $\left( \pi^N_X \left( \frac{i}{N} \right)  \right)_{0 \leqslant i \leqslant N}$,
where $\pi^N_X \left( \frac{i}{N} \right) = \pi^N_Z (i)$. Besides, the sequence of
processes $\left( X^N_{t} \right)_{t \geqslant 0}$ converges to $(X_t)_{t \geqslant 0}$ in
distribution, hence
\begin{align*}
\lim_{N \to \infty} N \psi^N_{N-1} 
&= \lim_{N \to \infty} \frac{N s_N^{}}{1 + N u_N^{} \nu_1^{}} \frac{\mathbb{E}_{\pi^N_X}\left( X^N_{} \left( 1- X^N_{} \right)^2 \left( 1 + \frac{u_N^{} \nu_1^{}}{1-X^N_{}} \right) \right)}{\mathbb{E}_{\pi^N_X}\left( X^N_{} \left( 1- X^N_{} \right) \left( 1 + \frac{u_N^{} \nu_1^{}}{1-X^N_{}} \right) \right)} \\
&= \frac{\sigma}{1 + \theta \nu_1^{}} \frac{\mathbb{E}_{\pi_X^{}} \left( X (1-X)^2 \right)  }{\mathbb{E}_{\pi_X^{}} \left( X (1-X)\right)}
= \frac{\sigma}{1 + \theta \nu_1^{}}(1-\tilde{x}),
\end{align*}
as claimed.
\end{proof}

\begin{remark}
The proof gives an alternative way to obtain the initial value $a_1$
(cf. \eqref{a_1}) of recursion \eqref{a_n}.
\end{remark}

\begin{theorem}
For a given $x \in [0,1]$, let $\left( k_N^{} \right)_{N \in \mathbb{N}}$
be a sequence with $0 < k_N^{} < N$ and $\lim_{N \to \infty} \frac{k_N}{N} = x$. Then
\begin{equation*}
\lim_{N \to \infty} \psi_{k_N^{}}^N = \psi(x),
\end{equation*}
\label{Konvergenztheorem}
where $\psi$ is the solution of the boundary value problem \eqref{psi DGL}.
\end{theorem}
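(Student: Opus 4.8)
The plan is to read the limit off the closed-form solution of Theorem~\ref{Rekursionsloesung psi} rather than off the difference equation itself. Writing $k=N-j$ so that $j=k_N^{}$ and $j/N\to x$, I would start from representation \eqref{psi_N-k}, which expresses $\psi_{N-k}^{}=\psi_j^{}$ through a single sum of the products $\chi_{j+1}^n=\prod_{i=j+1}^n\lambda_i^{}/\mu_i^{}$, the boundary value $\psi_{N-1}^{}$, and the inhomogeneity, and then show term by term that this converges to the integral representation \eqref{solution h} of $\psi(x)$. It is convenient to set $w(p):=p^{\theta\nu_0^{}}(1-p)^{\theta\nu_1^{}}\exp(\sigma p)$, so that \eqref{solution h} reads $\psi(x)=\tfrac{\sigma}{w(x)}\int_0^x(\tilde{x}-p)\,w(p)\,dp$ and $\tilde{x}$ from \eqref{x schlange} is exactly the $w$-weighted mean $\int_0^1 p\,w(p)\,dp/\int_0^1 w(p)\,dp$.

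First I would treat the three ingredients of \eqref{psi_N-k} separately. The prefactor $k(N-k)/\mu_{N-k}^{}=(N-j)/\bigl((N-j)/N+u_N^{}\nu_1^{}\bigr)$ is elementary and, since $Nu_N^{}\to\theta$, behaves like $N$ to leading order. For the bracket I would insert Lemma~\ref{Konvergenz N psi N-1}: because $\mu_{N-1}^{}/(N-1)=1/N+u_N^{}\nu_1^{}\sim(1+\theta\nu_1^{})/N$, the lemma gives $\tfrac{\mu_{N-1}^{}}{N-1}\psi_{N-1}^{}\sim\tfrac{\sigma}{N^2}(1-\tilde{x})$, while $s_N^{}(N-1-n)/N^2\sim\tfrac{\sigma}{N^2}(1-p_n)$ with $p_n:=n/N$; hence the bracket is asymptotically $\tfrac{\sigma}{N^2}(p_n-\tilde{x})$. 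For the products I would use \eqref{statVtlg}, which turns $\chi_{j+1}^n$ into the ratio $\pi_Z^N(n+1)\mu_{n+1}^{}/\bigl(\pi_Z^N(j+1)\mu_{j+1}^{}\bigr)$; the constant $C_N^{}$ cancels, and since $\mu_m^{}\sim Np(1-p)$ and $\pi_Z^N(m)=\pi_X^N(m/N)$, one reads off the pointwise limit $\chi_{j+1}^n\to w(p)/w(x)$ (equivalently $\varphi_1^{}(x)/\varphi_1^{}(p)$). Substituting these limits, the whole expression collapses to $\tfrac{\sigma}{w(x)}\cdot\tfrac1N\sum_{n=j}^{N-1}w(p_n)(p_n-\tilde{x})$, a Riemann sum for $\tfrac{\sigma}{w(x)}\int_x^1(p-\tilde{x})\,w(p)\,dp$. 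Since $\int_0^1(p-\tilde{x})w(p)\,dp=0$ by the very definition of $\tilde{x}$, this equals $\tfrac{\sigma}{w(x)}\int_0^x(\tilde{x}-p)w(p)\,dp=\psi(x)$, as claimed; the boundary cases $x\in\{0,1\}$ follow from $\psi_0^N=\psi_N^N=0$ together with a uniform bound near the endpoints.

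To make the passage rigorous I would, mirroring the proof of Lemma~\ref{Konvergenz N psi N-1}, recast the sum as a ratio of expectations $\EE_{\pi_X^N}[\,\cdot\,\mathbbm{1}_{\{X^N\geqslant p\}}]$ under the stationary law, so that the mass factors and $C_N^{}$ cancel and I can invoke the weak convergence $\pi_X^N\Rightarrow\pi_X^{}$ together with Wright's formula \eqref{wright}. Alternatively, the double-sum representation \eqref{psi_N-k alt} is attractive here, since there $\tilde{x}$ is produced automatically by the $\ell$-average against $\chi_1^{\ell}/K$ and no separate appeal to Lemma~\ref{Konvergenz N psi N-1} is needed.

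The hard part is precisely this interchange of limit and summation, and it has two sources. The summation range $n\geqslant j$ carries the indicator $\mathbbm{1}_{\{p\geqslant x\}}$, which is discontinuous, so weak convergence does not apply off the shelf; this is resolved by noting that $\pi_X^{}$ has a density, so $x$ is a continuity point, and squeezing the indicator between continuous approximants. More delicate is the \emph{uniform} control of the product $\chi_{j+1}^n\to w(p_n)/w(x)$, which runs over order $N$ factors and must be shown to converge uniformly in $n$ despite the logarithmic/power boundary behaviour of $w$ and $1/w$ near $p=1$ (and near $p=0$ when $x=0$), governed by the exponents $\theta\nu_0^{}$, $\theta\nu_1^{}$ in \eqref{wright}. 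Working with the $\pi_X^N$-expectations sidesteps this, since the effective integrand $p(1-p)(p-\tilde{x})$ is bounded and continuous while $\pi_X^{}$ is a genuine probability density; a dominated-convergence argument for the resulting sums then closes the estimate.
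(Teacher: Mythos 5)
Your proposal is correct and takes essentially the same route as the paper's own proof: both start from the closed form \eqref{psi_N-k} of Theorem \ref{Rekursionsloesung psi}, rewrite the products $\chi_{N-k+1}^n$ via \eqref{statVtlg} in terms of the stationary distribution $\pi_Z^N$, invoke Lemma \ref{Konvergenz N psi N-1} for the boundary term, pass to the limit via bounded (dominated) convergence of the resulting Riemann-sum/expectation expressions, and conclude with the defining property of $\tilde{x}$ in \eqref{x schlange} to convert $\int_x^1$ into $\int_0^x$. The technical points you flag are precisely what the paper handles with its bounded step functions $T_k^N$, $\tilde{T}_k^N$ and the pointwise convergence $\pi_Z^N(k_N^{})\to\pi_X^{}(x)$.
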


\begin{proof}
Using first Theorem \ref{Rekursionsloesung psi}, then \eqref{statVtlg},
and finally \eqref{eq:lambda_mu}, we obtain
\begin{align*}
\psi_k^N &= \frac{k(N-k)}{\mu_k^N} \sum_{n=1}^{N-k} \left( \prod_{i=k+1}^{N-n} \frac{\lambda^N_i}{\mu^N_i}  \right)  \left( \frac{\mu_{N-1}^N}{N-1} \psi^N_{N-1} - \frac{s_N^{} (n-1)}{N^2}  \right) \displaybreak[0] \\
&= \frac{k(N-k)}{\mu_k^N} \left( \mu^N_{k+1} \pi^N_Z (k+1)  \right)^{-1} \sum_{n=0}^{N-k-1} \mu^N_{N-n} \pi^N_Z (N-n) \left( \frac{\mu_{N-1}^N}{N-1} \psi^N_{N-1} - \frac{s_N^{} n}{N^2}  \right) \displaybreak[0] \\
&=\left( 1 + \mathcal{O}\left( \frac{1}{N} \right) \right) \left( \frac{k+1}{N} \frac{N-k-1}{N}   \pi^N_Z (k+1) \right)^{-1} \\
& \hphantom{=} \times \frac{1}{N} \sum_{n=0}^{N-k-1} \pi^N_Z (N-n) \frac{N-n}{N} \frac{n}{N} \left( 1 + \frac{Nu_N^{} \nu_1^{}}{n} \right) \left( (1 + N u_N^{} \nu_1^{}) N \psi^N_{N-1} - N s_N^{} \frac{n}{N} \right).
\end{align*}
In order to analyse the convergence of this expression, define
\begin{align*}
S_1^N (k) &:= \frac{k+1}{N} \frac{N-k-1}{N}  \pi^N_Z (k+1), \displaybreak[0] \\
S_2^N (k) &:= \frac{1}{N} \sum_{n=0}^{N-k-1} \pi^N_Z (N-n) \frac{N-n}{N} \frac{n}{N}  \left( (1 + N u_N^{} \nu_1^{}) N \psi^N_{N-1} - N s_N^{} \frac{n}{N} \right) \displaybreak[0] \\
& \hphantom{:} = \int_0^1 T^N_k (y) dy,  \displaybreak[0] \\
S_3^N (k) &:= \frac{1}{N} \sum_{n=0}^{N-k-1} \pi^N_Z (N-n) \frac{N-n}{N}   u_N^{} \nu_1^{} \left( (1 + N u_N^{} \nu_1^{}) N \psi^N_{N-1} - N s_N^{} \frac{n}{N} \right)\displaybreak[0] \\
& \hphantom{:} = \int_0^1 \tilde{T}_k^N (y) dy,
\end{align*}
with step functions $T^N_k : [0,1] \rightarrow \mathbb{R}$, $\tilde{T}^N_k : [0,1] \rightarrow \mathbb{R}$ given by
\begin{align*}
T^N_k (y)&:=\begin{cases}
 \mathbbm{1}_{ \lbrace n \leqslant N-k-1 \rbrace }\pi^N_Z (N-n) \frac{N-n}{N} \frac{n}{N}  \left( (1 + N u_N^{} \nu_1^{}) N \psi^N_{N-1} - N s_N^{} \frac{n}{N} \right),  \\
  \text{ \ if } \frac{n}{N} \leqslant y < \frac{n+1}{N}, n \in \lbrace 0, \dots , N-1 \rbrace , \\
 0,  \text{ if } y=1,
\end{cases} \displaybreak[0] \\
\tilde{T}^N_k (y) &:=  \begin{cases}
\mathbbm{1}_{ \lbrace n \leqslant N-k-1 \rbrace } \pi^N_Z (N-n) \frac{N-n}{N}   u_N^{} \nu_1^{} \left( (1 + N u_N^{} \nu_1^{}) N \psi^N_{N-1} - N s_N^{} \frac{n}{N} \right), \\
  \text{ \ if } \frac{n}{N} \leqslant y < \frac{n+1}{N}, n \in \lbrace 0, \dots , N-1 \rbrace , \\
 0,  \text{ if } y=1.
\end{cases}
\end{align*}
Consider now a sequence  $\left( k_N^{} \right)_{N \in \mathbb{N}}$ as in the
assumptions. Then $\lim_{N \to \infty} \pi^N_Z (k_N^{}) = \pi_X^{}(x)$
(cf. \cite[p. 319]{Durrett}), and due to Lemma \ref{Konvergenz N psi N-1}
\begin{align*}
\lim_{N \to \infty} S_1^N (k_N^{}) &= x (1-x) \pi^{}_X (x), \displaybreak[0] \\
\lim_{N \to \infty} T_{k_N^{}}^N (k_N^{}) 
&= \mathbbm{1}_{\lbrace y \leqslant 1-x \rbrace} \pi^{}_X (1-y)(1-y)y (\sigma (1- \tilde{x})- \sigma y), \displaybreak[0] \\
\lim_{N \to \infty} \tilde{T}_{k_N^{}}^N (k_N^{}) &= 0.
\end{align*}
Since $T^N_k$ and $\tilde{T}^N_k$ are bounded, we have
\begin{align*}
\lim_{N \to \infty} S_2^N (k_N^{}) &= \int_0^{1-x} \pi^{}_X (1-y)(1-y)y (\sigma (1- \tilde{x})- \sigma y) dy, \\
\lim_{N \to \infty} S_3^N (k_N^{}) &=0,
\end{align*}
thus
\begin{equation*}
\lim_{N \to \infty} \psi^N_{k_N^{}} = \left( x (1-x) \pi^{}_X (x) \right)^{-1} \int_0^{1-x} \pi^{}_X (1-y)(1-y)y (\sigma (1- \tilde{x})- \sigma y) dy.
\end{equation*}
Substituting on the right-hand side yields
\begin{align*}
\lim_{N \to \infty} \psi^N_{k_N^{}} 
&= \left( x (1-x) \pi^{}_X (x) \right)^{-1} \sigma \int_x^1 \pi_X^{}(y)y(1-y) (y - \tilde{x}) dy  \displaybreak[0] \\
&=\left( x (1-x) \pi^{}_X (x) \right)^{-1} \sigma \left[ \int_0^1 \pi_X^{} (y) y (1-y) (y - \tilde{x})dy + \int_0^x \pi_X^{} (y) y (1-y) ( \tilde{x}-y)dy \right] \displaybreak[0] \\
&=\left( x (1-x) \pi^{}_X (x) \right)^{-1} \sigma  \int_0^x \pi_X^{} (y) y (1-y) ( \tilde{x}-y)dy = \psi(x) ,
\end{align*}
where the second-last equality goes back to the definition of $\tilde{x}$
in \eqref{x schlange}, and the last is due to \eqref{solution h}, \eqref{def psi},
and \eqref{wright}.
\end{proof}

\end{document}